\pgfplotsset{compat=1.10}
\definecolor{Gray}{gray}{0.90}
\newtheorem{theorem}{\bf{Theorem}}[section]
\newtheorem{conjecture}[theorem]{\bf{Conjecture}}
\newtheorem{prop}[theorem]{\bf{Proposition}}
\newtheorem{fact}[theorem]{\bf{Fact}}
\newtheorem{remark}[theorem]{\bf{Remark}}
\newenvironment{definition}[1][Definition]{\begin{trivlist}
\item[\hskip \labelsep {\bfseries #1}]}{\end{trivlist}}
\newcommand{\calG}[0]{\mathcal{G}}
\newcommand{\calV}[0]{\mathcal{V}}
\newcommand{\calE}[0]{\mathcal{E}}
\newcommand{\calA}[0]{\mathcal{A}}
\newcommand{\calB}[0]{\mathcal{B}}
\newcommand{\calN}[0]{\mathcal{N}}
\newcommand{\calC}[0]{\mathcal{C}}
\newcommand{\norm}[1]{\left\lVert#1\right\rVert}
\definecolor{amber}{rgb}{1.0, 0.75, 0.0}
\begin{document}

\title{Interplay Between Resilience and Accuracy in Resilient Vector Consensus\\ in Multi-Agent Networks}
\author{Waseem Abbas, Mudassir Shabbir, Jiani Li, and Xenofon Koutsoukos
\thanks{W. Abbas, J. Li, and X. Koutsoukos are with the Electrical Engineering and Computer Science Department at Vanderbilt University, Nashville, TN, USA (Emails: \{\texttt{waseem.abbas, jiani.li, xenofon.koutsoukos\}@vanderbilt.edu)}. M. Shabbir is with the Computer Science Department at the Information Technology University, Lahore, Pakistan (Email: \texttt{mudassir@rutgers.edu)}.}
\thanks{This research was supported in part by the National Institute of Standards and Technology under Grant 70NANB18H198,
and by the National Science Foundation under award CNS1739328.}
}

\maketitle
\begin{abstract}
In this paper, we study the relationship between resilience and accuracy in the resilient distributed multi-dimensional consensus problem. We consider a network of agents, each of which has a state in $\mathbb{R}^d$. Some agents in the network are adversarial and can change their states arbitrarily. The normal (non-adversarial) agents interact locally and update their states to achieve consensus at some point in the convex hull $\calC$ of their initial states. This objective is achievable if the number of adversaries in the neighborhood of normal agents is less than a specific value, which is a function of the local connectivity and the state dimension $d$. However, to be resilient against adversaries, especially in the case of large $d$, the desired local connectivity is large. We discuss that resilience against adversarial agents can be improved if normal agents are allowed to converge in a bounded region $\calB\supseteq\calC$, which means normal agents converge at some point close to but not necessarily inside $\calC$ in the worst case. The accuracy of resilient consensus can be measured by the Hausdorff distance between $\calB$ and $\calC$. As a result, resilience can be improved at the cost of accuracy. We propose a resilient bounded consensus algorithm that exploits the trade-off between resilience and accuracy by projecting $d$-dimensional states into lower dimensions and then solving instances of resilient consensus in lower dimensions. We analyze the algorithm, present various resilience and accuracy bounds, and also numerically evaluate our results.
\end{abstract}

\begin{keywords}
Resilient consensus, computational geometry, fault-tolerant networks, distributed optimization.
\end{keywords}
\section{Introduction}
\label{sec:Intro}
Consider a network of agents in which each agent maintains a $d$-dimensional state vector and updates it by interacting with a subset of other agents. Some of the network agents may be adversarial (or faulty) and therefore send incorrect states to their neighbors. Moreover, non-adversarial, or commonly referred to as the \emph{normal} agents, are unaware of these adversarial agents' identities. Resilient distributed multi-dimensional consensus problem requires that in the presence of adversarial agents, normal agents update their states to converge to a common state in the convex hull of normal agents' initial states. Resilient multi-dimensional or vector consensus has several applications, such as in multirobot networks where robots operate in a multi-dimensional workspace \cite{park2017fault,park2018robust}, distributed computing \cite{tseng2013iterative,vaidya2013byzantine,vaidya2014iterative}, distributed optimization \cite{Wang2019,sundaram2018distributed}, and fault-tolerant multiagent networks \cite{su2016fault,guerrero2018design}.

There are distributed algorithms achieving resilient consensus in networks under certain conditions, which include bounding the maximum number of adversaries in the neighborhood of each normal agent, for instance, \cite{leblanc2013resilient,dibaji2017resilient,abbas2017improving,usevitch2019resilient,saldana2017resilient,mendes2013multidimensional,dibaji2015consensus,ghawash2019leveraging}. A recently proposed approximate distributed robust convergence (ADRC) algorithm \cite{park2017fault} guarantees convergence if each normal agent $i$ has at most $\lceil N_i/2^d\rceil-1$ adversaries in its neighborhood, where $N_i$ is the size of the neighborhood of $i$, and $d$ is the dimension of the state vector. Thus, in a multirobot network in which the state of each robot is its position in a $3$-dimensional Euclidean space, each robot needs to have at least $9$ robots in its neighborhood to become resilient to a single adversary. We observe that the underlying network graph needs to be dense and highly connected to ensure resilient vector consensus. Resilience incurs significant overhead in the form of many interactions between agents, which explodes with an increase in the dimension $d$.

In this paper, we study the interplay between resilience and accuracy in the distributed vector consensus algorithms. We analyze algorithms' performance when the number of adversaries exceeds the allowed limit, which is a function of the state dimension $d$ and the size of the neighborhood of normal agents in the underlying network. We discuss that if the conditions in resilient vector consensus algorithms are not satisfied, then the adversary can drive normal agents arbitrarily far away from the convex hull of their initial positions. However, the local connectivity requirements of normal agents within the network can be significantly relaxed if we desire normal agents to converge at some point close to but not necessarily inside the convex hull of their initial positions. In other words, we can improve the resilience of vector consensus at the cost of accuracy. We adopt a simple approach of partitioning a $d$-dimensional state into multiple lower-dimensional states to explore this resilience-accuracy trade-off. Instead of solving a single instance of $d$-dimensional resilient consensus, we solve multiple instances of lower-dimensional resilient consensus problems. Since a network exhibits improved resilience in low-dimensional states, the overall resilience is improved, albeit with reduced accuracy. Our main contributions are:

\begin{itemize}
\item We discuss the notion of accuracy in the resilient vector consensus problem and propose a framework to study the relationship between accuracy and resilience against adversarial agents.

\item We formulate the resilient bounded consensus problem to analyze an interplay between resilience and accuracy in higher dimensions and propose an algorithm to solve it.


\item We analyze the algorithm and present various resilience and accuracy bounds that demonstrate how the resilience against adversarial agents improves at the cost of accuracy. We also numerically evaluate our results.


\end{itemize}

The rest of the paper is organized as follows: Section \ref{sec:Preliminaries} presents preliminaries. Section \ref{sec:RDCHD} provides an overview of the resilient multi-dimensional consensus problem. Section \ref{sec:RBC} formulates the resilient bounded consensus problem to study the trade-off between resilience and accuracy and also presents an algorithm. Section \ref{sec:Analysis} analyzes the proposed algorithm. Section \ref{sec:NE} illustrates a numerical example, and Section \ref{sec:Con} concludes the paper. 
\section{Preliminaries}
\label{sec:Preliminaries}
We consider a network of agents modeled by a directed graph $\mathcal{G}=(\mathcal{V},\mathcal{E})$, where $\mathcal{V}$ represents agents and $\mathcal{E}$ represents interactions between agents. 
The state of each agent $i\in\calV$ {at time $t$} is represented by {a point $x_i(t) \in \mathbb{R}^d$}. An edge $(j,i)$ means that $i$ can observe the state value of $j$. The \emph{neighborhood} of $i$ is the set of nodes $\mathcal{N}_i =~\{j\in~\mathcal{V}| (j,i)\in\mathcal{E}\}\cup\{i\}$. 
For a given set of points ${X}\subset\mathbb{R}^d$, we denote its \emph{convex hull} by conv$({X})$. {A set of points in $\mathbb{R}^d$ is said to be in \emph{general positions} if no hyperplane of dimension $d-1$ or less contains more than $d$ points}. A point $x\in\mathbb{R}^d$ is an \emph{interior point} of a set ${X}\subset\mathbb{R}^d$ if there exists an open ball centered at $x$ which is completely contained in ${X}$. Let $X_1\subset\mathbb{R}^{d_1}$ and $X_2\subset\mathbb{R}^{d_2}$, then the \emph{Cartesian product} of their convex hulls, denoted by conv$({X_1})\times$ conv$({X_2})$ is $\{(x_1 \;\;x_2)|\;x_1\in\text{conv}(X_1)\; \text{and }x_2\in\text{conv}(X_2) \}$. We use terms agents and nodes interchangeably, and similarly use terms points and states interchangeably.  

\paragraph*{Normal and Adversarial Agents}There are two types of agents in the network, {normal} and {adversarial}. \emph{Normal} agents synchronously interact with their neighbors and update their states according to a pre-defined state update rule, which is the consensus algorithm. \emph{Adversarial} agents can change their states arbitrarily and do not follow the pre-defined update rule. Moreover, an adversarial agent can transmit different values to nodes in its neighborhood, referred to as the \emph{Byzantine} model. $ F_i$ denotes the number of adversarial agents in the neighborhood of agent $i$. A normal agent cannot distinguish between its normal and adversarial neighbors. 

\paragraph*{Resilient Vector Consensus}
The goal of the resilient vector consensus is to ensure the following two conditions:
(1) \emph{Safety --} Let ${X}(0) = \{x_1(0),x_2(0),\cdots,x_n(0)\}\subset\mathbb{R}^d$ be the set of initial states of normal nodes, then at each time step $t$, and for any normal node $i$, $x_i(t) \in\text{conv}({X}(0))$. 

\noindent
(2) \emph{Agreement --} For every $\epsilon>0$, there exists some $t_\epsilon$, such that $||x_i(t) -~ x_j(t)|| < ~\epsilon $ for all $t > t_\epsilon$, and for all normal node pairs $i,j$.


\section{Resilient Distributed Consensus in $\mathbb{R}^d$}
\label{sec:RDCHD}
In this section, first, we briefly discuss a resilient distributed vector consensus algorithm, known as the \emph{Approximate Distributed Robust Convergence (ADRC)}, proposed by Park and Hutchinson \cite{park2017fault}. Second, we discuss a computational improvement in the algorithm discussed in~\cite{Mudassir2020ACC}.

\subsection{Approximate Distributed Robust Convergence (ADRC)}
The ADRC algorithm guarantees the consensus of normal agents in $\mathbb{R}^d$ if the number of adversarial agents in the neighborhood of each normal agent is bounded by a certain value that depends on $d$. The notion of \emph{$F$-safe point} is crucial to understanding the algorithm.

\begin{definition} (\emph{$F$-safe point})
\label{def:safe}
Given a set of $N$ points in $\mathbb{R}^d$, of which at most $F$ are adversarial, then a point $p$ that is guaranteed to lie in the interior of the convex hull of $(N-F)$ normal points is an $F$-safe point.
\end{definition}


The ADRC algorithm relies on the computation of an $F_i$-safe point by every normal agent $i$ having $N_i$ agents in its neighborhood, of which at most $F_i$ are adversaries. 
The ADRC is a synchronous iterative algorithm in which each normal agent $i$ updates its state as follows \cite{park2017fault}:
\begin{itemize}
    \item In the iteration $t$, a normal agent $i$ gathers the state values of its neighbors $\mathcal{N}_i(t)$.
    \item Then, it computes an $F_i$-safe point, denoted by $s_i(t)$, of points corresponding to its neighbors' states.
    \item Agent $i$ then updates it's state as below.
    \begin{equation}
        \label{eq:ADRC}
        x_i(t+1) = \alpha_i(t)s_i(t) + (1 - \alpha_i(t))x_i(t),
    \end{equation}
    where $\alpha_i(t)$ is a dynamically chosen parameter in the range (0  1), whose value depends on the application \cite{park2017fault}.
\end{itemize}

If all normal agents follow the above routine, they are guaranteed to converge at some point in the convex hull of their initial states \cite{park2017fault}. The biggest challenge here is to ensure that each normal agent can compute a safe point. 
For this, \cite{park2017fault} utilized the ideas from Discrete Geometry and presented the following result.

\begin{prop} \emph{(Theoretical bound)}
\label{prop:safe_condition}
Given a set of $N$ points in general positions in $\mathbb{R}^d$, where $d\in\{2,3,\cdots, 8\}$, and at most $F$ points are adversarial, then it is possible to find an $F$-safe point if 
\begin{equation}
\label{eqn:theory}
N \ge (F+1)(d+1).
\end{equation}
\end{prop}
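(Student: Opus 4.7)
The plan is to derive the result from Tverberg's theorem of discrete geometry, followed by a pigeonhole count on the adversaries and a general-position argument for interiority.

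First I would recall Tverberg's theorem: every set of $(r-1)(d+1)+1$ points in $\mathbb{R}^d$ admits a partition into $r$ non-empty subsets whose convex hulls share a common point. Setting $r=F+1$, Tverberg's theorem only demands $F(d+1)+1$ points, which is strictly fewer than the hypothesis $N\geq (F+1)(d+1)$. From the $N$ given points I would select exactly $(F+1)(d+1)$ of them and invoke Tverberg to obtain a partition $T_1,\ldots,T_{F+1}$ with $|T_j|=d+1$ for every $j$, together with a common point lying in $\bigcap_{j=1}^{F+1}\text{conv}(T_j)$.

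Second, a pigeonhole count on the adversaries: at most $F$ of the selected points are adversarial, yet the partition has $F+1$ parts, so at least one part $T_{j^\star}$ consists entirely of normal agents. Consequently $\text{conv}(T_{j^\star})\subseteq\text{conv}(\text{normal points})$, so any point in $\text{conv}(T_{j^\star})$ automatically lies in the convex hull of the normal agents.

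The main technical step, which I expect to be the chief obstacle, is lifting "in the convex hull" to "in the \emph{interior} of the convex hull" as required by the definition of an $F$-safe point. Here the general-position hypothesis is crucial: each $T_j$ has $d+1$ points in general position, so $\text{conv}(T_j)$ is a non-degenerate $d$-simplex with non-empty $d$-dimensional interior. I would argue that in general position the common intersection $\bigcap_{j=1}^{F+1}\text{conv}(T_j)$ itself has non-empty interior in $\mathbb{R}^d$ — any lower-dimensional intersection arises from an incidence that is non-generic and can be excluded after an arbitrarily small perturbation consistent with the general-position hypothesis. Choosing $p$ to be any interior point of this common intersection makes $p$ interior to $\text{conv}(T_{j^\star})$ in particular, so an open $d$-ball around $p$ sits inside $\text{conv}(T_{j^\star})\subseteq\text{conv}(\text{normal points})$. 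Hence $p$ is an interior point of the convex hull of the normal agents, and is therefore $F$-safe.

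Finally I would remark that Tverberg's theorem itself holds in every dimension, so the restriction $d\in\{2,\ldots,8\}$ in the statement is not needed for the existence argument; rather, it reflects the range of dimensions for which a computationally tractable procedure for actually computing a Tverberg partition is available in the literature, which is what the agents need in order to execute the update~\eqref{eq:ADRC}.
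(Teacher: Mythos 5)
Your overall strategy---Tverberg partition into $F+1$ parts, pigeonhole to find an adversary-free part, then argue interiority---is exactly the route the paper relies on: the paper does not reprove this proposition but cites Park and Hutchinson and sketches precisely this Tverberg-plus-pigeonhole construction. The first two steps of your argument are sound: with $N\ge (F+1)(d+1)$ points Tverberg's theorem applies, at most $F$ of the $F+1$ parts can contain an adversarial point, so some part $T_{j^\star}$ is entirely normal and any common point of the hulls lies in $\text{conv}(T_{j^\star})\subseteq\text{conv}(\text{normal points})$.

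The genuine gap is exactly where you predicted it, and your fix does not close it. Tverberg's theorem guarantees nothing about the dimension of $\bigcap_{j}\text{conv}(T_j)$; already for $r=2$ and $d+2$ points in general position in the plane (Radon's case), the intersection of the two hulls is a single point, so it contains no interior point of $\text{conv}(T_{j^\star})$ and the construction produces no $F$-safe point. Your claim that a lower-dimensional intersection is ``non-generic and can be excluded after an arbitrarily small perturbation'' is not a proof: the points are the agents' actual states and cannot be perturbed, and, more to the point, the assertion that $r(d+1)$ points in general position always admit a Tverberg partition into $r$ simplices whose intersection is full-dimensional is precisely Reay's conjecture (the case $k=d$ of the count $(r-1)(d+1)+k+1$), which is open in general and has been established only for $d\le 8$ (Roudneff's result, which is what Park and Hutchinson invoke). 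This is also why the proposition is restricted to $d\in\{2,\ldots,8\}$; your closing remark that the restriction is merely computational is incorrect---the computational issue is handled separately by the practical bound of Proposition~\ref{prop:practical}, whereas $d\le 8$ is needed for the existence (interiority) step itself. Two smaller points: the extra $d$ points beyond Tverberg's $F(d+1)+1$ are not slack but are exactly what Reay's count consumes to push the intersection up to dimension $d$; and Tverberg's theorem alone does not let you insist that every part have exactly $d+1$ points, which your ``non-degenerate $d$-simplex'' step assumes.
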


In particular, \cite{park2017fault} used the notion of \emph{Tverberg partition} to compute an $F$-safe point. The main idea is to partition a set of $N$ points in $\mathbb{R}^d$ into $(F+1)$ parts such that the convex hull of points in one part has a non-empty intersection with the convex hull of points in any other part. An $F$-safe point is an interior point in the intersection of these $F+1$ convex hulls. In general, the computation of Tverberg partition of points is an NP-hard problem. The best known approximation algorithm runs in $d^{O(1)N}$ time and computes a Tverberg partition of $N$ points into $F+1$ parts if $F \le \lceil\frac{N}{2^d}\rceil - 1$. In other words, we can state the following:

\begin{prop} \emph{(Practical bound)}
Given a set of $N$ points in general positions in $\mathbb{R}^d$, of which at most $F$ are adversarial and $d\le 8$, then it is possible to compute an $F$-safe point (using Tverberg partition) if
\label{prop:practical}
\begin{equation}
    \label{eq:practical}
    F \le \left\lceil\frac{N}{2^d}\right\rceil - 1.
\end{equation}
\end{prop}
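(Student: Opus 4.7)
The plan is to derive the proposition by invoking the known polynomial-time approximation algorithm for Tverberg partitions (the one alluded to in the paragraph preceding the statement) and then converting the partition into an $F$-safe point via a pigeonhole argument. The proof has three ingredients: rewriting the hypothesis in the form required by the algorithm, invoking the algorithm to produce the partition, and arguing that the resulting common point is not only in the convex hull of the normal points but in its interior.

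First I would translate the bound. The hypothesis $F \le \lceil N/2^d\rceil - 1$ is equivalent to $N > 2^d F$, which is exactly the form under which the Miller--Sheehy / Mulzer--Werner style approximation of Tverberg's theorem is known to succeed for $d\le 8$. Feeding the $N$ input points into this algorithm yields, in $d^{O(1)}N$ time, a partition $P_1,\ldots,P_{F+1}$ of the point set into $F+1$ nonempty groups together with an explicit point $p$ satisfying
\begin{equation*}
p \;\in\; \bigcap_{k=1}^{F+1}\mathrm{conv}(P_k).
\end{equation*}

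Next I would establish that $p$ is safe via pigeonhole. Since there are $F+1$ parts and at most $F$ of the original points are adversarial, some part $P_{k^*}$ must consist entirely of normal points. Because $p\in\mathrm{conv}(P_{k^*})$ and $P_{k^*}$ is a subset of the $N-F$ normal points, $p$ lies in their convex hull regardless of which $F$ points the adversary chooses to corrupt. This already matches the ``containment'' half of Definition~\ref{def:safe}.

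The main obstacle, and the step that I expect to require the most care, is promoting this containment to membership in the \emph{interior} of the convex hull of the normal points, as Definition~\ref{def:safe} demands. Here the general-position hypothesis does the work: since no hyperplane of dimension $d-1$ contains more than $d$ points, each $\mathrm{conv}(P_k)$ is full-dimensional whenever $|P_k|\ge d+1$, and $p$ can be taken in the relative interior of the Tverberg intersection; small perturbations of $p$ inside that intersection remain in $\mathrm{conv}(P_{k^*})$, hence in the interior of the convex hull of the normal set. I would also explicitly flag the role of the restriction $d\le 8$, which is inherited from the regime in which the underlying approximation algorithm (and the Tverberg existence result of Proposition~\ref{prop:safe_condition}) is known to apply; outside this range the approximation guarantee $F\le\lceil N/2^d\rceil-1$ is not claimed.
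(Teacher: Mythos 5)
Your proposal is correct and follows essentially the same route as the paper, which does not give a formal proof but presents Proposition~\ref{prop:practical} as a direct restatement of the cited approximation algorithm for Tverberg partitions (running in $d^{O(1)}N$ time and producing $F+1$ parts when $F\le\lceil N/2^d\rceil-1$), combined with the pigeonhole observation that at least one part is entirely normal so the common intersection point lies in the convex hull of the normal points. Your additional care about promoting containment to \emph{interior} membership is a reasonable refinement of a step the paper itself only asserts.
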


Thus, \eqref{eqn:theory} and \eqref{eq:practical} provide theoretical and practical resilience guarantees of the ADRC algorithm, respectively. Consequently, in a network $\mathcal{G}$, if a normal agent $i$ has $N_i$ neighbors and at most $F_i$ of them are adversarial, then resilient consensus is guaranteed by the ADRC algorithm, if $F_i \le \lceil\frac{N_i}{2^d}\rceil - 1$, for every normal agent $i$.

\subsection{Resilient Vector Consensus Using Centerpoint}
Recently, \cite{Mudassir2020ACC} proposed to utilize the notion of \emph{centerpoint} instead of Tverberg partition to compute a safe point. Centerpoint can be viewed as an extension of the median in higher dimensions and is defined below.

\begin{definition}
\label{def:centerpoint}
\emph{(Centerpoint)}
Given a set $X$ of $N$ points in $\mathbb{R}^d$ in general positions, a centerpoint $p$ is a point, not necessarily from $X$, such that any closed half-space\footnote{A closed half-space in $\mathbb{R}^d$ is a set of the form $\{x\in\mathbb{R}^d: a^Tx \ge b\}$ for some $a\in \mathbb{R}^d\setminus \{0\}$.} of $\mathbb{R}^d$ containing $p$ also contains at least $\frac{N}{d+1}$ points from $X$.
\end{definition}

It is shown in \cite{Mudassir2020ACC} that for a given set of $N$ points in $\mathbb{R}^d$, an $F$-safe point is essentially an interior centerpoint for~$F=\frac{N}{d+1}-1$. For instance, consider six points in $\mathbb{R}^2$ in Figure \ref{fig:CE}(a). The gray region is the set of all centerpoints. At the same time, the gray region is also the set of all $1$-safe points. It means that no matter which one of the six points is adversarial, every point in the the gray region is guaranteed to lie in the convex hull of remaining five normal points. For example, in Figure \ref{fig:CE}(b), the red point is adversary and the yellow region is the convex hull of normal points. We observe that all $1$-safe points (gray region) lie inside the convex hull of normal points. The same is illustrated for a different adversary point in Figure \ref{fig:CE}(c). 

\begin{figure}[htb]
\centering
\begin{subfigure}[b]{0.14\textwidth}
\centering
\includegraphics[scale=0.5]{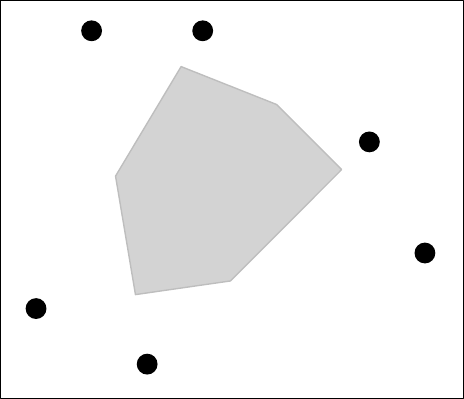}
\caption{}
\end{subfigure}
\begin{subfigure}[b]{0.14\textwidth}
\centering
\includegraphics[scale=0.5]{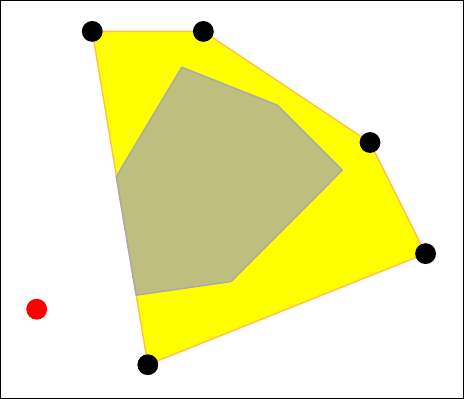}
\caption{}
\end{subfigure}
\begin{subfigure}[b]{0.14\textwidth}
\centering
\includegraphics[scale=0.5]{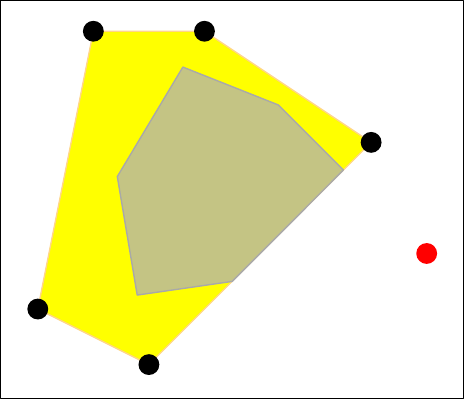}
\caption{}
\end{subfigure}
\caption{The region of centerpoints (gray area) is same as the region of $1$-safe points.}
\label{fig:CE}
\end{figure}

Consequently, instead of using Tverberg partition, we can use centerpoint to compute an $F$-safe point. 
{We also note that every point that lies in the intersection of the Tverberg partition of $N$ points in $(F+1)$ parts, where $F=\lceil\frac{N}{d+1}\rceil-1$, is also a centerpoint (and hence, an $F$-safe point). However, the converse is not true in general, that is, a point in the intersection of Tverberg partition of points may not be a centerpoint.}
The main advantage of using centerpoint is that it improves the practical resilience guarantees of the ADRC algorithm. In particular, we have the following result \cite{Mudassir2020ACC}.

\begin{prop}
\label{prop:practical_CP_1}
Given a set of $N$ points in general positions in $\mathbb{R}^d$, of which at most $F$ are adversarial, then an $F$-safe point can be computed (using centerpoint) if

\begin{equation}
\label{eq:practical_CP}
\begin{split}
 F & \le \left\lceil\frac{N}{d+1}\right\rceil - 1, \hspace{0.2in} \text{for } d = 2,3, \text{ and}\\
 F & = \Omega\left(\frac{N_i}{d^2}\right)  \hspace{0.7in} \text{for } d > 3,
\end{split}
\end{equation}
where $r>1$ is some positive integer. Moreover, such an $F$-safe point can be computed in $O(N)$ and $O(N^2)$ times in $\mathbb{R}^2$ and $\mathbb{R}^3$ respectively, and in $O\left( N^{c\log d} (2d)^d \right)$ in $\mathbb{R}^d$ for $d>3$, where $c$ is some constant.
\end{prop}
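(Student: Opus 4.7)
The plan is to reduce the proposition to known algorithms for computing exact or approximate centerpoints, and to use the equivalence between centerpoints and safe points already sketched earlier in the paper. First I would verify correctness. If $p$ is a centerpoint of a set $X$ of $N$ points in general position in $\mathbb{R}^d$, then by Definition \ref{def:centerpoint} every closed half-space through $p$ contains at least $\lceil N/(d+1)\rceil$ points of $X$. Equivalently, if at most $F \le \lceil N/(d+1)\rceil - 1$ of the $N$ points are removed as adversaries, then the remaining $N-F$ normal points still surround $p$ in every direction, so $p$ lies in the convex hull of the normal points. General position combined with a small perturbation places $p$ strictly in the interior, yielding an $F$-safe point in the sense of Definition \ref{def:safe}.

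Next I would invoke the algorithmic side. In $\mathbb{R}^2$ the Jadhav--Mukhopadhyay linear-time algorithm computes an exact centerpoint in $O(N)$ time by walking in the line-arrangement dual; in $\mathbb{R}^3$ an exact centerpoint can be computed in $O(N^2)$ time using a plane-sweep together with a $k$-set argument. Combining either with the reduction above yields the bound $F \le \lceil N/(d+1)\rceil - 1$ for $d \in \{2,3\}$ with the running times claimed in the statement.

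For $d > 3$ exact centerpoint computation is prohibitively expensive, so I would instead invoke the approximate-centerpoint procedure of Clarkson, Eppstein, Miller, Sturtivant and Teng: starting from a small set of random samples and repeatedly replacing each $(d+2)$-subset by its Radon point, the procedure returns a point of Tukey depth $\Omega(N/d^2)$ in time $O(N^{c \log d}(2d)^d)$. Applying the same correctness argument to this approximate centerpoint gives the bound $F = \Omega(N/d^2)$ stated in \eqref{eq:practical_CP}, together with the claimed running time.

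The main obstacle is not the algorithmic piece, since those are black-box invocations of known results, but rather the clean statement of the implication that a (possibly approximate) centerpoint is an $F$-safe point. The subtle point is the quantifier: one needs $p$ to lie in the convex hull of \emph{every} $(N-F)$-subset of normal points, not of some single subset, and this is precisely what Tukey depth at least $F+1$ guarantees via the Helly-type equivalence $p \in \text{conv}(X \setminus Y)$ for all $Y \subset X$ with $|Y| \le F$. Making this equivalence precise and handling the interior-versus-boundary technicality of Definition \ref{def:safe} through the general-position hypothesis is the only nonroutine step.
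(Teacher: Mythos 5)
Note first that the paper does not actually prove this proposition: it is imported verbatim from \cite{Mudassir2020ACC}, and the paper merely points readers to \cite{jadhav1994computing,chan2004optimal,miller2010approximate} for the algorithmic ingredients. So there is no in-paper proof to compare against; what you have written is a reconstruction of the argument in the cited works, and as such it is essentially the standard (and correct) route: (i) the Tukey-depth/Helly-type equivalence showing that a point of depth at least $F+1$ lies in $\text{conv}(X\setminus Y)$ for every $Y$ with $|Y|\le F$, and (ii) black-box invocation of the known centerpoint algorithms in each dimension regime. Your identification of the quantifier issue (``every $(N-F)$-subset, not some subset'') as the crux matches how \cite{Mudassir2020ACC} frames the centerpoint--safe-point equivalence.

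Two caveats. First, the one place your argument is genuinely thin is the interior condition in Definition \ref{def:safe}: depth $\ge 1$ with respect to the surviving points puts $p$ in their convex hull but not necessarily in its \emph{interior} (a hull vertex has depth $1$), and ``a small perturbation'' is not a proof. The clean fix is the one the paper alludes to when it speaks of an \emph{interior} centerpoint: the set of points of depth $\ge \lceil N/(d+1)\rceil$ is a full-dimensional convex body for points in general position, so one can select a centerpoint in its interior, which then has depth strictly exceeding $F$ in every direction and survives removal of $F$ points as an interior point of the remainder's hull. Second, your attributions are slightly off relative to the paper's citations --- the $O(N^2)$ bound in $\mathbb{R}^3$ is Chan's randomized Tukey-depth algorithm \cite{chan2004optimal} rather than a plane-sweep/$k$-set construction, and the $O\bigl(N^{c\log d}(2d)^d\bigr)$ running time with depth $\Omega(N/d^2)$ is the Miller--Sheehy derandomization \cite{miller2010approximate} rather than the randomized iterated-Radon procedure --- but these are bookkeeping issues, not mathematical gaps. (The statement itself contains artifacts you rightly ignored: the stray ``$N_i$'' should be $N$, and the clause ``where $r>1$ is some positive integer'' refers to nothing.)
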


{For algorithmic details of computing a centerpoint, we refer readers to \cite{jadhav1994computing,chan2004optimal,miller2010approximate}.}

\section{Resilience-Accuracy Trade-off and\\ Resilient Bounded Consensus}
\label{sec:RBC}
If the number of adversaries $F_i$ in the neighborhood of a normal agent $i$ satisfies \eqref{eqn:theory}, then all normal agents are guaranteed to converge in the convex hull of their initial points. Here, we are interested in analyzing the interplay between between resilience and accuracy of the algorithm. In other words, \emph{what are the implications if the number of adversaries is greater than the one in \eqref{eqn:theory}?} Can the normal agents still converge? If they do, how far could the agreement point be from the convex hull of initial points?

First, we note that if $F\ge \lceil N/(d+1)\rceil$, then an $F$-safe point does not exist. To illustrate this, consider an example in Figure \ref{fig:AC} with $N=6$ points and $F=2$. Assuming the right-most two points are adversarial, the convex hull of normal nodes is the yellow region in Figure \ref{fig:AC}(a), and a $2$-safe point must lie in the interior of this region. Similarly, consider another situation in which the left-most two nodes are adversarial, as shown in Figure \ref{fig:AC}(b). Then, a $2$-safe point must lie in the interior of the convex hull of normal nodes shown as the green region. We observe that the intersection of interiors of convex hulls of normal points in Figures \ref{fig:AC}(a) and (b), shown as yellow and green regions respectively, is an empty set, which means there is no $2$-safe point here. Thus, $F \le \lceil\frac{N}{d+1}\rceil - 1$ is not only a sufficient but also a necessary condition for the existence of an $F$-safe point.   

\begin{figure}[htb]
\centering
\begin{subfigure}[b]{0.14\textwidth}
\centering
\includegraphics[scale=0.5]{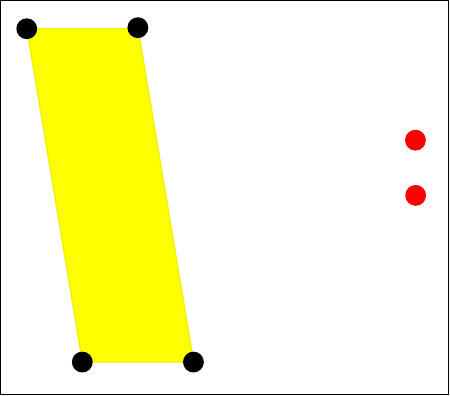}
\caption{}
\end{subfigure}
\begin{subfigure}[b]{0.14\textwidth}
\centering
\includegraphics[scale=0.5]{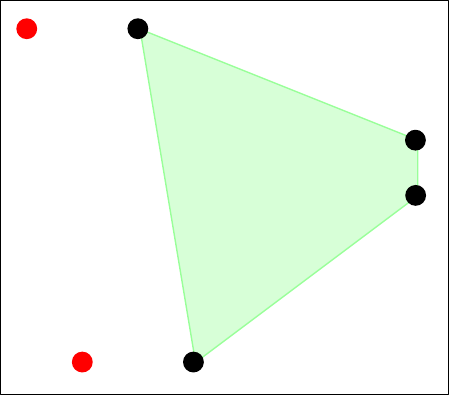}
\caption{}
\end{subfigure}
\begin{subfigure}[b]{0.14\textwidth}
\centering
\includegraphics[scale=0.5]{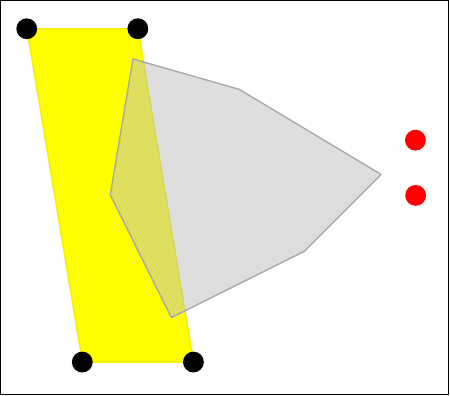}
\caption{}
\end{subfigure}
\caption{The interior of convex hulls of normal nodes in (a) and (b) do not intersect. In (c), the gray region is the centerpoint region of all six points.}
\label{fig:AC}
\end{figure}

Second, we note that if $F\ge \lceil \frac{N}{d+1}\rceil$, then the centerpoint of a cloud of $N$ points can be arbitrarily far away from the convex hull of normal points.\footnote{By the Centerpoint Theorem, every finite set of points in $\mathbb{R}^d$ has some centerpoint \cite{matousek2013lectures}.} For instance, in Figure~\ref{fig:AC}(c), every point in the gray region is a centerpoint. The adversaries (red points) can move arbitrarily far away from the normal points, and there would be centerpoints arbitrarily far away from the convex hull of normal points. Thus, in the ADRC algorithm, if a normal node $i$ has $N_i$ neighbors, of which $F_i\ge \lceil\frac{N_i}{d+1}\rceil$ are adversarial, and $i$ updates its state based on a centerpoint of its neighbors' states that is chosen arbitrarily from the centerpoint region, then $x_i(t)$ can be arbitrarily far away from the convex hull of normal nodes' initial states.



Next, we ask \emph{if it is possible to guarantee the convergence of normal agents in some bounded region $\mathcal{B}$ if each normal agent $i$ satisfies $\lceil\frac{N_i}{d+1}\rceil\le F_i\le F_i'$ for some $F_i'$?} Here, we can expect $\mathcal{C}\subseteq \mathcal{B} \subset \mathbb{R}^d$, where $\mathcal{C} = \text{conv}(X(0))$ is the convex hull of the points corresponding to normal agents' initial states. In other words, is it possible to improve the resilience of the consensus algorithm in $d$ dimensions at the cost of accuracy, where accuracy measures how far away from $\mathcal{C}$ do normal agents converge? To formalize this, we define a \emph{Resilient Bounded Consensus} problem below.

\begin{definition} (\emph{Resilient Bounded Consensus})
\label{def:RACC}
Consider a network $\mathcal{G}(t)=(\mathcal{V},\mathcal{E}(t))$ of agents in which each normal agent $i$ has $N_i(t)$ neighbors, of which at most $F_i(t)$ are adversarial. Normal agents in $\calV$ update their states such that at each time step $t$ and for every normal agent $i$, the state $x_i(t)$ is in a bounded convex region $\mathcal{B}\subset\mathbb{R}^d$ (irrespective of the states of adversarial agents). Moreover, for every $\epsilon>0$, there exists some $t_\epsilon$ such that $||x_i(t) -~ x_j(t)|| < ~\epsilon $ for all $t > t_\epsilon$ and for all normal node pairs $i,j$.
\end{definition}

Here, we consider $\mathcal{B}$ such that $\calC\subseteq\calB$. If $\mathcal{B}=\mathcal{C}$, we get the typical resilient vector consensus problem (Section \ref{sec:Preliminaries}). 

\subsection{Accuracy and Resilience in Resilient Bounded Consensus} If $x^\ast\in\calB$ is a consensus point of all normal agents in the resilient bounded consensus, then the distance between $x^\ast$ and $\calC$, denoted by $\delta(x^\ast,\calC)$, is defined as,
\begin{equation}
\label{eq:xC}
\delta(x^\ast,\calC) =  \min_{c\in\partial\mathcal{C}}|| x^\ast - c||,
\end{equation}
where $\partial\mathcal{C}$ is the boundary of $\mathcal{C}$, and $|| x^\ast - c||$ is the Euclidean distance between points $x^\ast$ and $c$. To quantify the \emph{accuracy} of bounded consensus---how far can the agreement point in $\mathcal{B}$ be from $\mathcal{C}$---we use the notion of \emph{Hausdorff distance}, which is often used to measure how well one convex shape approximates the other \cite{bronstein2008approximation,lopez2005hausdorff}.

\begin{definition} (\emph{Hausdorff Distance})
\label{def:Hausdorff}
Given two convex regions $\mathcal{B}, \mathcal{C}\subset \mathbb{R}^d$, the \emph{Hausdorff distance} from $\mathcal{B}$ to $\mathcal{C}$ is 
\begin{equation}
\label{eq:HD}
\delta(\mathcal{B},\mathcal{C}) = \max_{b\in\partial\mathcal{B}}\min_{c\in\partial\mathcal{C}}|| b - c||,
\end{equation}
where $|| b - c||$ is the Euclidean distance between $b$ and $c$, and $\partial\mathcal{B}$, $\partial\mathcal{C}$ are the boundaries of $\mathcal{B}$ and $\mathcal{C}$, respectively.
\end{definition}

Note that $\delta(x^\ast,\calC) \le \delta(\mathcal{B},\mathcal{C})$. Typically, we state the accuracy of resilient bounded consensus relative to the \emph{diameter} of $\calC$, which is denoted by $\mu(\calC)$, and defined as, 

\begin{equation}
    \label{eq:diamC}
    \mu(C) = \max_{c_1,c_2\in\mathcal{C}}|| c_1 - c_2||.
\end{equation}

We are interested in the ratio $\delta(\calB,\calC)/\mu(\calC)$ to examine the \emph{accuracy} of resilient bounded consensus.

The \emph{resilience} of the resilient bounded consensus algorithm is measured by the maximum number of adversarial agents $F_i$ in the neighborhood of a normal agent $i$, such that despite the presence of these adversarial agents, all normal agents achieve resilient bounded consensus 
inside the convex region $\calB$. If $\calB = \calC$, the resilience bound is $F_i \le \frac{N_i}{d+1}-1$. As $\calB$ grows ($\calB\supset\calC$), the accuracy deteriorates as $\delta(\calB,\calC)$ increases. At the same time, the resilience bound may improve. 
We are interested in a resilient bounded consensus algorithm that exploits this \emph{resilience-accuracy} trade-off.



\subsection{Resilient Bounded Consensus Algorithm}
Our approach to achieving resilient bounded consensus is to partition the $d$-dimensional state into parts, implement the centerpoint based resilient consensus algorithm (in lower dimensions) on each part, and then combine the results to get the updated $d$-dimensional state. The performance, in terms of resilience and accuracy, will depend on the partition of state to lower dimensions. We illustrate the approach by an example, and then outline the details.

Consider a complete network of seven agents, of which two are adversarial, and each agent's state is in $\mathbb{R}^3$. The initial positions of normal agents are $(3 \; 10 \; 10)$, $(5 \; 5 \; 5)$, $(10 \; 4 \; 2)$, $(1 \; 4 \; 4)$, $(4 \; 2 \; 9)$, and the convex hull of their initial positions $\mathcal{C}$ is shown in Figure~\ref{fig:example_1}(a). To guarantee the convergence of normal agents in $\mathcal{C}$, each normal agent needs to have at least nine neighbors according to \eqref{eqn:theory}, which is not the case here. Alternatively, we aim for a resilient bounded consensus in which the goal is to guarantee that all normal agents converge at some point in a bounded region $\mathcal{B}_1$ (Figure \ref{fig:example_1}(b)). Each normal agent implements two instances of resilient consensus algorithm (ADRC): a 2-dimensional resilient consensus on the first two coordinates of its neighbors' states, and a scalar consensus on the remaining third coordinate. Since $F_i=2$ and each normal node $i$ has seven neighbors, 2-dimensional resilient consensus ensures that the first two state coordinates of all normal nodes converge to a point in the convex hull of their initial values in those coordinates. Similarly, resilient scalar consensus guarantees that the third coordinate of the state of all normal nodes converges to a value in the range defined by their initial third coordinate values. Consequently, all normal nodes converge to a point in a polytope $\mathcal{B}_1$.
Here $\delta(\mathcal{B}_1,\mathcal{C}) = 5.4$, the diameter of $\mathcal{C}$ is 12.2, and hence $\delta(\mathcal{B}_1,\mathcal{C})/\mu(\mathcal{C}) = 0.44$. 
Thus, the resilience is improved at the cost of accuracy. In Figure~\ref{fig:example_1}(c), we illustrate the box $\mathcal{B}_2$ in which all normal agents converge to some point as a result of coordinate-wise resilient consensus. Here, 
$\delta(\mathcal{B}_2,\mathcal{C}) = 5.7$, and $\delta(\mathcal{B}_2,\mathcal{C})/\mu(\mathcal{C}) = 0.47$.


\begin{figure}[!h]
\centering
\begin{subfigure}{.155\textwidth}
\centering
\includegraphics[scale=0.21]{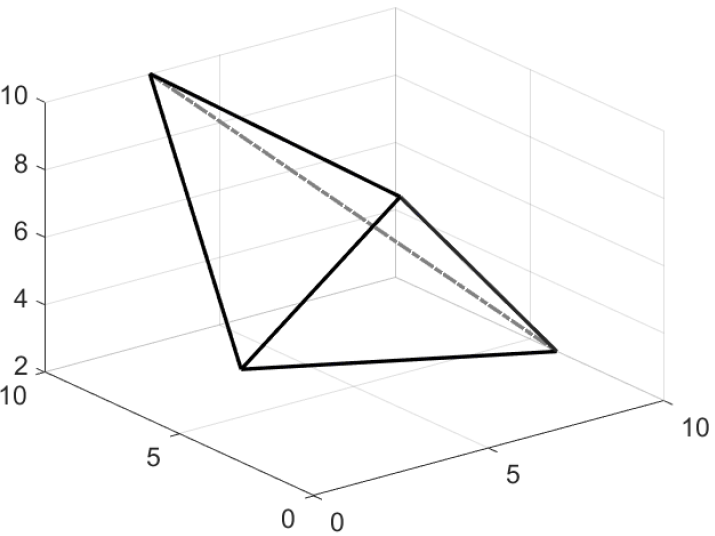}
\caption{$\mathcal{C}$}
\end{subfigure}
\begin{subfigure}{.155\textwidth}
\centering
\includegraphics[scale=0.21]{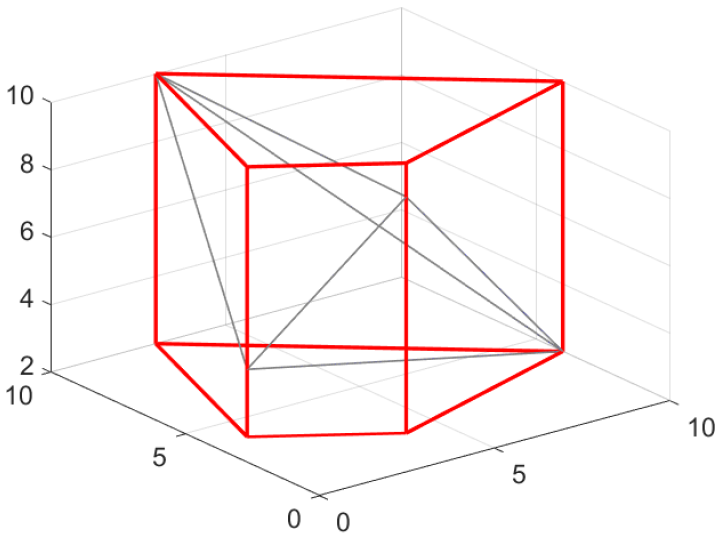}
\caption{$\mathcal{B}_1$}
\end{subfigure}
\begin{subfigure}{.155\textwidth}
\centering
\includegraphics[scale=0.21]{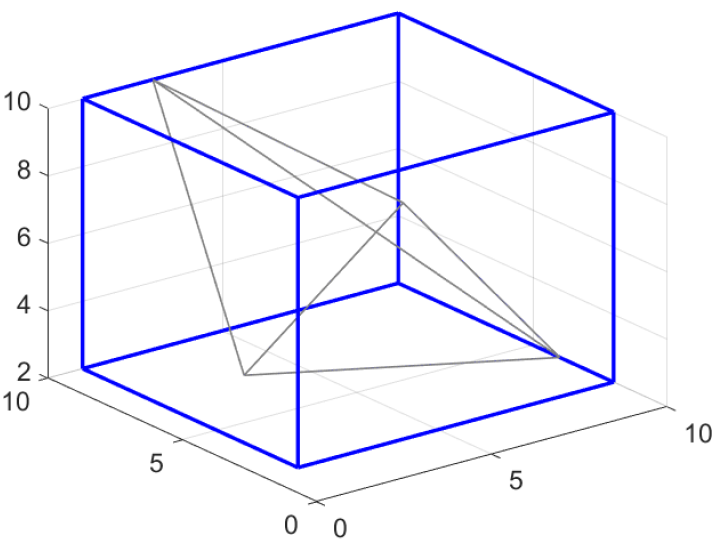}
\caption{$\mathcal{B}_2$}
\end{subfigure}
\caption{An example of resilient bounded consensus in $\mathbb{R}^3$.}
\label{fig:example_1}
\end{figure}

Next, we present \emph{Resilient Bounded Consensus (RBC)} algorithm for each normal agent $i$ in Algorithm \ref{algo:RBC}. First, we introduce some notations. Let $I = \{1,2,\cdots,d\}$, and $P = \{I_1,I_2,\cdots,I_k\}$ be a partition of $I $ into $k$ subsets. If $x_i(t) = [x_{i,j}(t)]_{j\in I}\in\mathbb{R}^d$, then $x_{i}^{\ell}(t)\in\mathbb{R}^{|I_\ell|}$ denotes the vector consisting of the values of $x_i(t)$ at the coordinates indexed in $I_\ell$, that is, $x_i^\ell(t) = [x_{i,j}(t)]_{j\in I_\ell}$. Further, we define $F_i^\ell(t)=\frac{N_i(t)}{|I_\ell|+1}-1$, where $N_i(t)=~|\calN_i(t)|$.

\begin{algorithm}
\caption{RBC for a Normal Agent $i$}
\label{algo:RBC}
\begin{algorithmic}[1]
\State \textbf{Given} Partition $P$ of $I=\{1,2,\cdots,d\}$.
\For {each iteration $t$}
\For{each $I_\ell \in P$} 
\State Compute $x_j^{\ell}(t)$, $\forall j\in\calN_i(t)$.
\State Compute an $F_i^\ell(t)$-safe point, say $s_i^\ell(t)$, by 
\Statex \hspace{.37in} computing a centerpoint of $\{x_j^{\ell}(t)\}$, $j\in\calN_i(t)$.
\State Update $x_i^{\ell}(t)$ by the following rule:
\Statex \hspace{0.6in}$x_i^{\ell}(t+1) = \alpha_i(t)s_i^{\ell}(t) + (1 - \alpha_i(t))x_i^{\ell}(t)$.
\EndFor
\State Combine $x_i^\ell(t+1)$, $\forall \ell \in \{1,\cdots,k\}$ to get the
\Statex \hspace{0.18in} updated state $x_i(t+1)\in\mathbb{R}^d$.
\EndFor
\end{algorithmic}
\end{algorithm}

In line 6, $\alpha_i(t)$ satisfies $0 < \alpha_i(t)< 1$, and is chosen depending on the specific application \cite{park2017fault}.

\section{Analysis of the Resilient Bounded Consensus Algorithm}
\label{sec:Analysis}
In this section, we analyze the accuracy and resilience of Algorithm \ref{algo:RBC}.
First, we define the notions of \emph{jointly reachable} and \emph{repeatedly reachable} sequence of graphs \cite{park2017fault} needed to state the convergence of RBC algorithm. Let $\bar{\calG}(t) = (\bar{\calV}, \bar{\calE}(t))$ be a graph representing normal nodes $\bar{\calV}\subseteq\calV$ and edges between them at time $t$. 
\begin{definition} (\emph{Repeatedly reachable graph sequence})
\label{def:jR}
Let $j$ be a non-negative integer. A finite sequence of graphs $\bar{\calG}(T_j), \bar{\calG}(T_j+1) \cdots, \bar{\calG}(T_{j+1}-1) $, where each graph in the sequence has the same vertex set $\bar{\cal{V}}$ is called jointly reachable if the union of graphs defined as $$
\bigcup\limits_{t=T_j}^{T_{j+1}-1}\bar{\cal{G}}(t) = \left(\bar{\calV},\bigcup\limits_{t=T_j}^{T_{j+1}-1}\bar{\cal{E}}(t)\right)
$$
contains a vertex $v\in\bar{\calV}$ such that for every $v'\ne v$ there exists a path form $v'$ to $v$ in this union of graphs.
\end{definition}

\begin{definition} (\emph{Jointly reachable graph sequence})
\label{def:rR}
An infinite sequence of graphs $\bar{\calG}(0), \bar{\calG}(1) \cdots$ is called repeatedly reachable if there is a sequence of times $0=T_1 < T_2 < T_3 \cdots$ such that $T_{j+1} - T_j < \infty$ and the subsequence $\bar{\calG}(T_j), \bar{\calG}(T_j + 1),\cdots \bar{\calG}(T_{j+1} - 1)$ is jointly reachable $\forall j$.
\end{definition}
Basically, an infinite sequence $\bar{\calG}(0), \bar{\calG}(1) \cdots$ is repeatedly reachable if it can be partitioned into contiguous finite length subsequences that are themselves jointly reachable.

Moreover, we define $X^\ell(0)$ to be the set of initial positions of normal agents at indices in $I_\ell$, that is, 
\begin{equation}
\label{eq:Xl}
X^\ell(0) := \{x_i^\ell(0)\}_{i \in \bar{\calV}},
\end{equation}
where $\bar{\calV}$ is the set of normal agents. Similarly, let $\calC^\ell$ be the convex hull of points in $X^\ell(0)$, that is,  
\begin{equation}
\label{eq:muCl}
\mathcal{C}^\ell := \text{Conv}(X^\ell(0)).
\end{equation}
A consequence of \cite[Theorem V.1]{park2017fault} is Theorem \ref{thm:main} below
\begin{theorem}
\label{thm:main}
Let $\mathcal{G}(t) = (\bar{\calV}\cup\calA,\calE(t))$ be a network of normal $\bar\calV$ and $\calA$ adversarial agents, where each $i\in(\bar\calV\cup\calA)$ has a state $x_i(t)\in \mathbb{R}^d$. Let $P = \{I_1,I_2,\cdots, I_k\}$ be a partition of $I = \{1,2,\cdots, d\}$ into $k$ subsets. Each $i\in\bar{\calV}$ implements Algorithm \ref{algo:RBC}, and has at most $F_i(t)$ adversaries in its neighborhood at time $t$. If 
\begin{equation}
    F_i(t) \le \left\lceil\frac{N_i(t)}{\max_\ell |I_\ell| + 1}\right\rceil - 1,
\end{equation}
 and the sequence of connectivity graphs of normal agents $\bar{\calG}(0), \bar{\calG}(1),  \cdots$ is repeatedly reachable, then all normal agents converge to a common point in $\calB$, which is a Cartesian product $\calC^1 \times \calC^2 \times \cdots \times \calC^k$.
\end{theorem}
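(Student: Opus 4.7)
The plan is to decompose the $d$-dimensional problem into $k$ independent lower-dimensional instances of the centerpoint-based ADRC algorithm, one per partition block $I_\ell$, and then invoke \cite[Theorem V.1]{park2017fault} on each instance separately before recombining. The key observation driving this reduction is that lines 4--6 of Algorithm \ref{algo:RBC} operate entirely within the coordinates indexed by $I_\ell$: the safe point $s_i^\ell(t)$ is computed solely from the projected neighbor states $\{x_j^\ell(t)\}_{j\in\calN_i(t)}$, and the $d$-dimensional update is nothing more than the concatenation of $k$ independent $|I_\ell|$-dimensional ADRC updates. Since an adversarial agent may send arbitrary values on every coordinate, each projection inherits the same adversarial set $\calA$ and the same connectivity graph sequence as the original problem.

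I would next verify that the safe-point existence hypothesis holds for every block. Proposition \ref{prop:practical_CP_1} guarantees that in $\mathbb{R}^{|I_\ell|}$ a centerpoint is an $F_i$-safe point whenever $F_i \le \lceil N_i/(|I_\ell|+1)\rceil - 1$. Because $|I_\ell|\le \max_\ell |I_\ell|$ implies
\begin{equation}
\left\lceil\frac{N_i(t)}{\max_\ell |I_\ell|+1}\right\rceil \;\le\; \left\lceil\frac{N_i(t)}{|I_\ell|+1}\right\rceil,
\end{equation}
the hypothesis of Theorem \ref{thm:main} forces $F_i(t)\le \lceil N_i(t)/(|I_\ell|+1)\rceil - 1$ for every $\ell$ simultaneously, so every normal agent can indeed compute an $F_i(t)$-safe point in each of the $k$ sub-problems. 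The graph sequence $\bar{\calG}(0), \bar{\calG}(1), \ldots$ is shared by all $k$ instances and is assumed repeatedly reachable.

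Applying \cite[Theorem V.1]{park2017fault} to the $\ell$-th instance then yields that the projected states $\{x_i^\ell(t)\}_{i\in\bar{\calV}}$ converge to a common point $x^{\ast,\ell}\in\calC^\ell = \mathrm{Conv}(X^\ell(0))$. Doing this for each $\ell\in\{1,\ldots,k\}$ and stacking the limits gives a common $d$-dimensional consensus point $x^\ast = (x^{\ast,1}, \ldots, x^{\ast,k})$ that lies in $\calC^1\times\calC^2\times\cdots\times\calC^k = \calB$, establishing both the agreement and the containment-in-$\calB$ conclusions.

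The main obstacle I anticipate is making the reduction fully rigorous rather than merely plausible. One has to check that the Byzantine adversarial model used by \cite{park2017fault} lifts cleanly to each projection (an adversary that can transmit arbitrary $\mathbb{R}^d$ values can a fortiori transmit arbitrary values on any coordinate sub-block), and that sharing a single scalar step size $\alpha_i(t)\in(0,1)$ across the $k$ parts of line~6 is compatible with \cite[Theorem V.1]{park2017fault}, whose proof allows the step size to be chosen dynamically and independently per instance. Once these bookkeeping items are settled, the result follows immediately from $k$ parallel applications of the scalar/low-dimensional ADRC convergence theorem, with no new convergence argument required in $\mathbb{R}^d$.
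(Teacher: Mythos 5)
Your proposal is correct and follows essentially the same route as the paper's own proof: reduce to $k$ independent lower-dimensional ADRC instances, use $\bigl\lceil N_i(t)/(\max_\ell |I_\ell|+1)\bigr\rceil - 1 \le \bigl\lceil N_i(t)/(|I_\ell|+1)\bigr\rceil - 1$ to verify the safe-point condition for every block, invoke \cite[Theorem V.1]{park2017fault} per block, and stack the limits into $\calC^1\times\cdots\times\calC^k$. The additional bookkeeping you flag (the Byzantine model restricting to coordinate sub-blocks, and the shared $\alpha_i(t)$) is a reasonable extra caution that the paper treats as implicit.
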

\begin{proof}
In Algorithm \ref{algo:RBC}, each normal agent $i$ implements $k$ instances of ADRC algorithm. In the $\ell^{th}$ instance, at each time step $t$, $i$ gathers $x_j^\ell(t)\in\mathbb{R}^{|I_\ell|}$ for all $j\in\calN_i(t)$, computes an $F_i^\ell(t)$-safe point and update its state by moving towards the safe point. Now the convergence of $x_i^\ell(t)$, $\forall i\in \bar{\calV}$ to a common point in $\calC^\ell$ is guaranteed if $F_i(t)\le F_i^\ell(t)=\lceil\frac{N_i(t)}{|I_\ell|+1}\rceil-1$ and the sequence of connectivity graphs of normal nodes is repeatedly reachable \cite[Theorem V.1]{park2017fault}. Since
$$
F_i(t) \le\left\lceil\frac{N_i(t)}{\max_\ell |I_\ell| + 1}\right\rceil - 1 \le \left\lceil\frac{N_i(t)}{|I_\ell| + 1}\right\rceil - 1 = F_i^{\ell}(t),
$$
$x_i^\ell(t)$ for all $i\in \bar{\calV}$ converge to some point in $\calC^\ell$. This is true for all $\ell \in \{1,2,\cdots, k\}$, so an immediate consequence is that $d$-dimensional state $x_i(t)\in\mathbb{R}^d$ of each normal node $i$ converges to a common point in the cross product $\calC^1\times\calC^2\times\cdots\times\calC^k$, which is the desired result.
\end{proof}
Theorem \ref{thm:main} provides a resilience bound for the Algorithm~\ref{algo:RBC} and Figure~\ref{fig:F_Comp} illustrates it in terms of the resilience of the ADRC algorithm. We consider a network in which agents have $d$-dimensional states and a normal agent $i$, implementing ADRC algorithm, is resilient against at most $F_i$ adversaries in its neighborhood. In Figure \ref{fig:F_Comp}(a), we plot the resilience of RBC (in terms of $F_i$) as a function of the maximum dimension of state obtained after partitioning a $d$-dimensional state. We observe that resilience improves as the maximum dimension of state in the partition decreases. For instance, if a $d$-dimensional state is partitioned into two $d/2$-dimensional states, the resilience of RBC improves by a factor of $\frac{\frac{2N}{d+2}-1}{\frac{N}{d+1}-1}$ as compared to the resilience of ADRC algorithm. In Figure~\ref{fig:F_Comp}(b), we fix $F_i$ and plot the number of agents needed in the neighborhood of a normal agent $i$ for it to be resilient against $F_i$ adversaries. Here, $N_i$ is the number of agents needed in the neighborhood of $i$ in the case of ADRC algorithm in $d$-dimensions. Again, we note that as the maximum dimension of state in the partition decreases, the required number of agents in the neighborhood of $i$ decreases. In other words, same resilience can be achieved with reduced local connectivity.
\begin{figure}[htb]
\centering
\begin{subfigure}[b]{0.23\textwidth}
\centering
\includegraphics[scale=0.165]{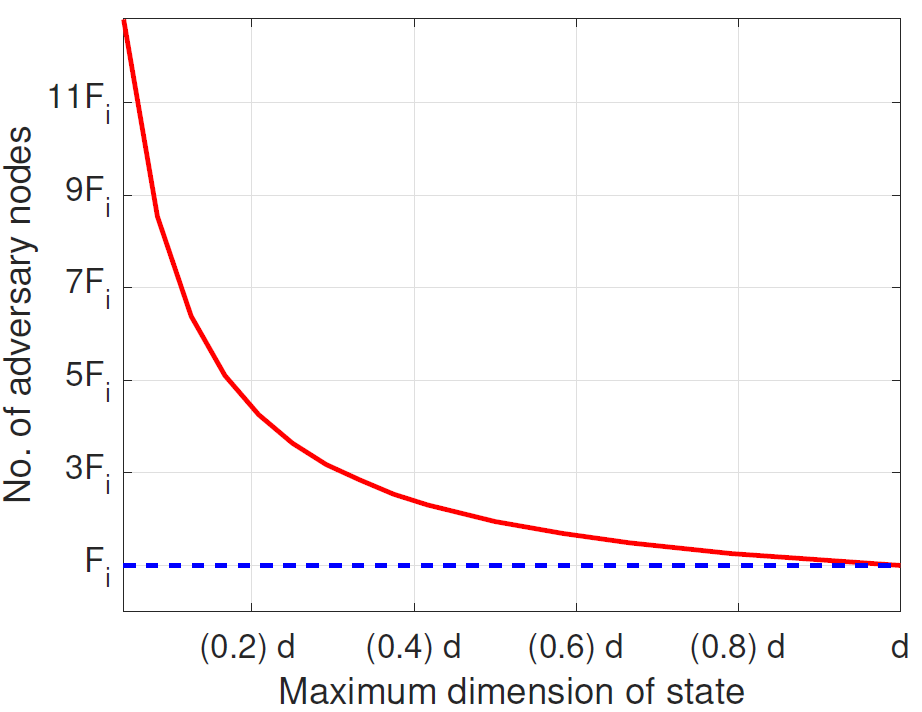}
\caption{Fixed $N_i$}
\end{subfigure}
\begin{subfigure}[b]{0.2\textwidth}
\centering
\includegraphics[scale=0.165]{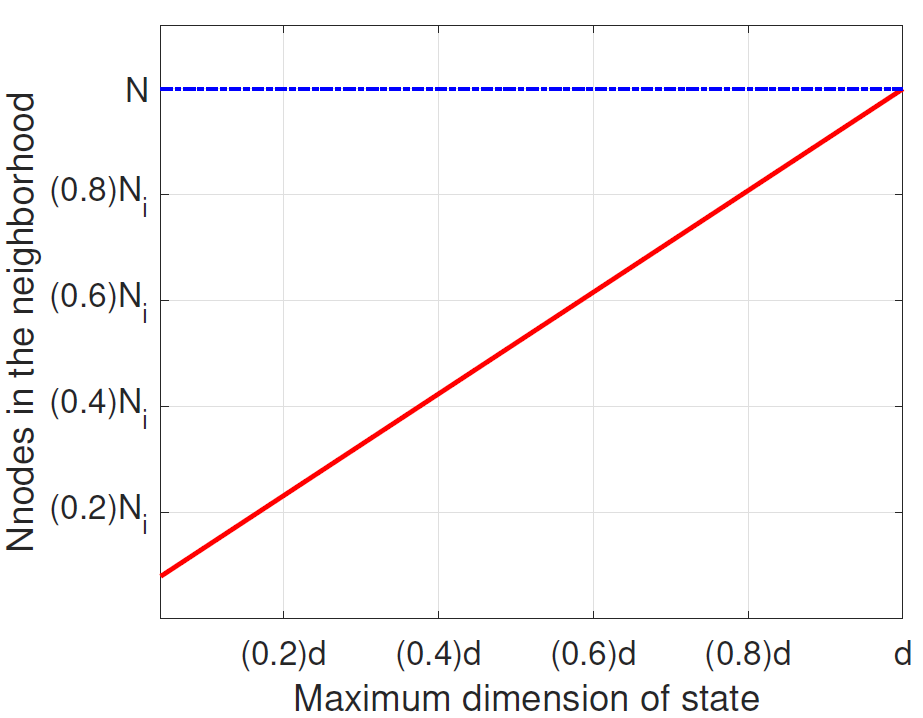}
\caption{Fixed $F_i$}
\end{subfigure}
\caption{Resilience of RBC as a function of the maximum dimension of state among partitioned states.}
\label{fig:F_Comp}
\end{figure}

\subsection{Accuracy of the Resilient Bounded Consensus Algorithm}
Next, we discuss the accuracy of RBC algorithm by computing the Hausdorff distance $\delta(\calB,\calC)$, where $\calB\supseteq\calC$ is the convex region in which normal nodes converge as a result of RBC. The shape of region $\calB$ and $\delta(\calB,\calC)$ depend on the partition of dimension. If we partition the $d$-dimensional state into $d$ 1-dimensional states (scalars) in RBC algorithm and each normal agent $i$ satisfies $N_i(t)\ge 2(F_i(t)+1)$, then all normal agents will converge in a hyperrectangle $\calB = \calC^1\times\calC^2\times\cdots\times\calC^d$, where $\calC^\ell$ is defined in \eqref{eq:muCl}. Note that each $\calC^\ell$ here is an interval in the $\ell^{th}$ dimension. We call such a convex region as the \emph{axis-parallel bounding box}, and note the following.

\begin{fact}
In RBC algorithm, $\delta(\calB,\calC)$ is maximum when $\calB$ is an axis-parallel bounding box.
\end{fact}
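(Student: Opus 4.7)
The plan is to reduce the claim to two ingredients: (i) the set $\calB$ produced by the RBC algorithm, viewed as a function of the coordinate partition $P$, is monotone under refinement of $P$, and (ii) the Hausdorff distance $\delta(\cdot,\calC)$ is monotone in its first argument when $\calC$ is contained in both sets. Once these are in place, the fact follows because the axis-parallel bounding box corresponds to the finest possible partition $P_{\text{fine}}=\{\{1\},\{2\},\dots,\{d\}\}$ of $I$, so every $\calB$ arising from the algorithm under any coarser partition is contained in it.

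First I would establish the refinement lemma: if $P'$ refines $P$, then $\calB(P)\subseteq\calB(P')$. The proof is a one-line convex combination argument. Take any $x\in\calB(P)=\calC^{I_1}\times\cdots\times\calC^{I_k}$, fix any block $I_\ell$ of $P$, and write the projection $x^{I_\ell}$ as a convex combination $\sum_i\lambda_i x_i^{I_\ell}(0)$ of initial projections of normal agents, which exists because $x^{I_\ell}\in\calC^{I_\ell}=\mathrm{Conv}(X^{I_\ell}(0))$. Then for every sub-block $J\subseteq I_\ell$ arising in the refinement $P'$, the further projection $x^{J}=\sum_i\lambda_i x_i^{J}(0)$ is a convex combination of the $x_i^{J}(0)$, and hence belongs to $\calC^{J}$. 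Doing this for every block shows $x\in\calB(P')$. Iterating refinements down to $P_{\text{fine}}$ gives $\calB(P)\subseteq\calB(P_{\text{fine}})=\calC^1\times\cdots\times\calC^d$, the axis-parallel bounding box.

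Second I would verify monotonicity of the Hausdorff distance defined in \eqref{eq:HD}. The observation is that since $\calC\subseteq\calB$ and $\calB$ is closed convex, the map $g(b)=\min_{c\in\partial\calC}\|b-c\|$ coincides with $\mathrm{dist}(b,\calC)$ on $\partial\calB$ (any $b\in\partial\calB$ lies outside $\mathrm{int}(\calC)$, else a ball around $b$ inside $\calB$ would contradict $b\in\partial\calB$). The function $\mathrm{dist}(\cdot,\calC)$ is convex, so its maximum over the compact convex set $\calB$ is attained at an extreme point, hence on $\partial\calB$, giving $\delta(\calB,\calC)=\max_{b\in\calB}\mathrm{dist}(b,\calC)$. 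Monotonicity is then immediate: if $\calC\subseteq\calB_1\subseteq\calB_2$, then $\delta(\calB_1,\calC)\le\delta(\calB_2,\calC)$.

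Combining the two steps, for any partition $P$ used by Algorithm \ref{algo:RBC} the region $\calB(P)$ sits inside the axis-parallel bounding box $\calB_{\text{box}}=\calC^1\times\cdots\times\calC^d$, so $\delta(\calB(P),\calC)\le\delta(\calB_{\text{box}},\calC)$, with equality attained by the singleton partition. The main obstacle is purely notational: the Hausdorff distance in \eqref{eq:HD} is written with $\partial\calB$ and $\partial\calC$ rather than the whole sets, so one must justify that under the inclusions $\calC\subseteq\calB_1\subseteq\calB_2$ the boundary-only formulation still satisfies the desired monotonicity; this is exactly what the convexity argument above handles.
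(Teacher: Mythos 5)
Your proof is correct. Note that the paper itself states this Fact without any proof, so there is no argument of the authors' to compare against; your write-up supplies the missing justification. The two-step structure is sound: the refinement lemma is exactly the right observation (a projection of a convex combination is the convex combination of the projections, so passing to a finer partition only enlarges the product region $\calC^{I_1}\times\cdots\times\calC^{I_k}$, with the singleton partition yielding the axis-parallel bounding box and the trivial partition yielding $\calC$ itself, which also gives $\calC\subseteq\calB(P)$ for free). Your handling of the boundary-based definition \eqref{eq:HD} is the one place where care is genuinely needed, and you handle it properly: since $\calC\subseteq\calB$, every $b\in\partial\calB$ lies outside $\mathrm{int}(\calC)$, so $\min_{c\in\partial\calC}\lVert b-c\rVert$ equals the distance from $b$ to the convex set $\calC$, and convexity of that distance function lets you replace the maximum over $\partial\calB$ by the maximum over all of $\calB$, from which monotonicity in the first argument is immediate. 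The only cosmetic remark is that the Fact, read literally, compares the regions $\calB$ produced by different admissible partitions in Algorithm \ref{algo:RBC}; your chain $\calB(P)\subseteq\calB(P_{\text{fine}})$ for every partition $P$, combined with the monotonicity of $\delta(\cdot,\calC)$, is precisely that comparison, so the argument is complete as written.
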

Thus, in the following our goal is to estimate the ratio $\delta(\calB,\calC)/\mu(\calC)$, where $\calC$ is the convex hull of a set of points in $\mathbb{R}^d$, $\calB$ is the corresponding axis-parallel bounding box, and $\mu(\calC)$ is the diameter of $\calC$. We start our discussion with the following conjecture:
\begin{conjecture}
For a given set of points in $\mathbb{R}^d$, let $\calC$ be the convex hull of points, $\mu(\calC)$ be the diameter of $\calC$ and $\calB$ be the corresponding axis-parallel bounding box, then
\begin{equation}
\label{eq:conjecture}
\delta(\calB,\calC) \le \sqrt{\frac{d}{2}}\;\mu(\calC).
\end{equation}
\end{conjecture}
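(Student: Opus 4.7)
The plan is to reduce the Hausdorff distance to a calculation at a single ``worst'' vertex of the bounding box, then control the distance from that vertex to $\calC$ via the diameter. Observe that $f(b) := \min_{c \in \calC}\|b-c\|$ is the distance from $b$ to a convex set, hence a convex function of $b$. Since $\calB$ is a polytope and $\calC \subseteq \calB$, a standard argument shows $\delta(\calB,\calC) = \max_{b\in\calB} f(b)$, and the maximum of a convex function over a polytope is attained at an extreme point, i.e.\ a vertex of $\calB$. So it suffices to prove that, for every vertex $b$ of $\calB$, some point of $\calC$ lies within Euclidean distance $\sqrt{d/2}\,\mu(\calC)$ of $b$.

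Fix a vertex $b=(b_1,\ldots,b_d)$ of $\calB$. By definition of the axis-parallel bounding box, each coordinate $b_i$ is an extremum (min or max) of the $i$-th coordinate over points of $\calC$, so there is a witness point $p^{(i)} \in \calC$ with $p^{(i)}_i = b_i$. The candidate I would use in $\calC$ is the centroid $q := \frac{1}{d}\sum_{i=1}^d p^{(i)}$, which lies in $\calC$ by convexity. Writing $a_{ij} := p^{(i)}_j - b_j$ (so that $a_{ii}=0$) gives $q_j - b_j = \frac{1}{d}\sum_{i \ne j} a_{ij}$, the quantity I ultimately need to bound.

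The deviations $a_{ij}$ are controlled \emph{pairwise} by the diameter: restricting the diameter bound $\|p^{(i)} - p^{(j)}\|^2 \le \mu(\calC)^2$ to just the two coordinates $i$ and $j$ yields $a_{ji}^2 + a_{ij}^2 \le \mu(\calC)^2$, because $p^{(i)}_i - p^{(j)}_i = -a_{ji}$ and $p^{(i)}_j - p^{(j)}_j = a_{ij}$. Summing this inequality over all $\binom{d}{2}$ unordered pairs gives $\sum_{i \ne j} a_{ij}^2 \le \frac{d(d-1)}{2}\mu(\calC)^2$. Combining this with the Cauchy--Schwarz bound $(q_j - b_j)^2 \le \frac{d-1}{d^2}\sum_{i \ne j} a_{ij}^2$ and summing over $j$ should yield $\|b-q\|^2 \le \frac{(d-1)^2}{2d}\,\mu(\calC)^2 < \frac{d}{2}\,\mu(\calC)^2$, which gives the conjectured bound (in fact slightly stronger).

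The argument is mostly elementary, so the main obstacle is bookkeeping rather than anything deep. Two points I would watch: (i) the choice of $p^{(i)}$ need not be unique when several points of $\calC$ attain the same coordinate extremum, so one must fix a consistent choice and verify that the argument does not depend on it (it does not, since every such $p^{(i)}$ lies in $\calC$); and (ii) the Cauchy--Schwarz step might be lossy, so it is worth asking whether the constant $\sqrt{d/2}$ is sharp. Our proof suggests that it is not, and testing extremal candidates such as the regular simplex or the ``cross'' consisting of the $2d$ coordinate extrema at the vertices of $\calB$ would clarify how much further the bound can be tightened.
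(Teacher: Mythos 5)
Your argument is correct, and it is worth emphasizing that it settles the general statement, which the paper itself only proves for $d=3$ (Theorem \ref{thm:3dproof}) and for certain symmetric pointsets (Theorem \ref{thm:symmetric}), explicitly leaving the general case as a conjecture. The reduction to a vertex of $\calB$ is sound: the distance-to-$\calC$ function is convex, every point of $\partial\calB$ lies outside the interior of $\calC$ so distance to $\partial\calC$ equals distance to $\calC$, and the witnesses $p^{(i)}$ exist by minimality of the bounding box. The bookkeeping also checks out: restricting the diameter bound to coordinates $i$ and $j$ gives $a_{ji}^2+a_{ij}^2\le\norm{p^{(i)}-p^{(j)}}^2\le\mu(\calC)^2$, summing over unordered pairs gives $\sum_{i\ne j}a_{ij}^2\le\tfrac{d(d-1)}{2}\mu(\calC)^2$, and Cauchy--Schwarz yields $\norm{b-q}^2\le\tfrac{(d-1)^2}{2d}\mu(\calC)^2$. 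The resulting constant $\tfrac{d-1}{d}\sqrt{d/2}$ is exactly the value attained by the extremal construction given after Theorem \ref{thm:symmetric}, so your bound is sharp; it strictly improves the conjectured $\sqrt{d/2}$ and even improves the paper's own $d=3$ bound from $\sqrt{3/2}\,\mu(\calC)$ to $\tfrac{2}{3}\sqrt{3/2}\,\mu(\calC)$. Methodologically the routes differ: the paper's three-dimensional proof picks two witness points on adjacent faces, projects onto a coordinate plane, and derives a contradiction from the diameter, while its symmetric-case theorem uses the centroid but only for the fully symmetric witness configuration where its norm can be computed exactly. You keep the centroid idea but replace the symmetry assumption with the pairwise two-coordinate diameter inequality plus Cauchy--Schwarz, which is precisely what makes the argument work for arbitrary pointsets and all $d$; the only cost is that the bound is no longer an exact computation but an inequality, which the extremal example shows is nonetheless tight.
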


We prove the above statement for $d=3$ and some other special cases. We believe that the above statement is true for any $d$, and would like to find a proof for the general case in the future.
We begin by proving in three dimensions.

\begin{theorem}
\label{thm:3dproof}
Let $\calC$ be a convex hull of a given set of points in $\mathbb{R}^3$ and $\calB$ be the axis-parallel bounding box of $\calC$, then 
\begin{equation}
\label{eqn:Conjecture_3D}
\delta(\calB,\calC) \le \sqrt{\frac{3}{2}}\;\mu(\calC).
\end{equation}
\end{theorem}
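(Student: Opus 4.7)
The plan is to reduce the Hausdorff distance to a maximum over the eight vertices of the bounding box $\calB$, and for each such vertex exhibit a point of $\calC$ that is coordinate-wise close via a centroid of three boundary witnesses. The key observation is that every face of $\calB$ supports $\calC$, so each axis-extreme value of $\calB$ is realized by some point of $\calC$; averaging three such witnesses produces a point in $\calC$ that approximates the chosen vertex of $\calB$ in every coordinate simultaneously.

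First I would replace $\partial\calB$ by the eight vertices of $\calB$. Since each face of $\calB$ is a supporting hyperplane of $\calC$, no point of $\partial\calB$ lies in the relative interior of $\calC$, and therefore $\min_{c\in\partial\calC}\|b-c\| = d(b,\calC)$ for every $b\in\partial\calB$, where $d(\cdot,\calC)$ denotes Euclidean distance to the convex set $\calC$. This function is convex on $\mathbb{R}^3$, so on each (convex) face of $\calB$ it attains its maximum at a vertex of that face; hence $\delta(\calB,\calC)$ equals $\max_v d(v,\calC)$ over the vertices $v$ of $\calB$.

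Next, fix a vertex $v=(v_1,v_2,v_3)$ of $\calB$. For each $i\in\{1,2,3\}$, $v_i$ is either the minimum or the maximum of the $i$-th coordinate over $\calC$, so there exists a witness point $p^{(i)}\in\calC$ whose $i$-th coordinate equals $v_i$. By convexity of $\calC$, the centroid $c := \tfrac{1}{3}\bigl(p^{(1)}+p^{(2)}+p^{(3)}\bigr)$ lies in $\calC$. For $\{j,k\}=\{1,2,3\}\setminus\{i\}$,
\[
v_i - c_i \;=\; \tfrac{1}{3}\bigl(2v_i - p^{(j)}_i - p^{(k)}_i\bigr),
\]
and since $p^{(j)}_i,\,p^{(k)}_i$ both lie in an interval of width $a_i$ (the $i$-th coordinate width of $\calC$) with $v_i$ at one endpoint, I get $|v_i-c_i|\le\tfrac{2}{3}a_i$.

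Summing coordinate-wise, and using the obvious inequality $a_i\le\mu(\calC)$ (two points of $\calC$ realize a distance at least $a_i$ along axis $i$), yields
\[
d(v,\calC)^2 \;\le\; \|v-c\|^2 \;\le\; \tfrac{4}{9}\sum_{i=1}^{3}a_i^2 \;\le\; \tfrac{4}{3}\mu(\calC)^2,
\]
so $d(v,\calC)\le\tfrac{2}{\sqrt{3}}\mu(\calC)\le\sqrt{3/2}\,\mu(\calC)$, which proves the theorem. The argument in three dimensions has no serious obstacle beyond the convex-reduction step; the genuinely hard point for the general conjecture is that the same centroid construction in $d$ dimensions gives only $\tfrac{d-1}{\sqrt{d}}\mu(\calC)$, which already exceeds $\sqrt{d/2}\,\mu(\calC)$ for $d\ge 4$. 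Establishing the conjecture beyond $d=3$ would therefore demand a sharper witness, perhaps a weighted combination tuned to the axis-widths $a_i$ rather than a uniform centroid.
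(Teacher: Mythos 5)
Your proof is correct, and it takes a genuinely different route from the paper's. The paper argues by contradiction with only \emph{two} witnesses: it picks points $a,b$ of the set lying on two adjacent faces of $\calB$ through the worst corner, projects them onto the plane $z=0$ where they become orthogonal, and derives $\mu(\calC)^2 \ge \norm{a}^2+\norm{b}^2-2h_z^2$, which is incompatible with both norms exceeding $\sqrt{3/2}\,\mu(\calC)$ because $h_z\le\mu(\calC)$. Your argument is instead constructive: you first reduce $\delta(\calB,\calC)$ to the vertices of $\calB$ via convexity of the distance function (a step the paper handles implicitly by taking the origin to be the worst point and tacitly treating it as a corner), and then exhibit an explicit point of $\calC$, the centroid of three axis-witnesses, within $\tfrac{2}{3}a_i$ of the vertex in each coordinate. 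This buys you two things: a strictly sharper constant, $\delta(\calB,\calC)\le \tfrac{2}{\sqrt{3}}\mu(\calC) < \sqrt{3/2}\,\mu(\calC)$ (still consistent with the paper's extremal example, which attains $\sqrt{2/3}\,\mu(\calC)$ in $\mathbb{R}^3$), and a mechanical generalization to every $d$ giving $\tfrac{d-1}{\sqrt{d}}\mu(\calC)$ --- which, as you correctly observe, is weaker than the conjectured $\sqrt{d/2}\,\mu(\calC)$ once $d\ge 4$, so it does not settle the conjecture. One cosmetic point: the phrase ``no point of $\partial\calB$ lies in the relative interior of $\calC$'' should be ``in the interior of $\calC$'' (for degenerate, lower-dimensional $\calC$ one has $\partial\calC=\calC$ and the identity $\min_{c\in\partial\calC}\norm{b-c}=d(b,\calC)$ holds anyway), but this does not affect the argument.
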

\begin{proof} Without loss of generality, let all coordinates of input points be non-negative and let origin be the point on the bounding box that is at maximum distance from $\calC$. Let $a,b$ be the points in the given set that lie, respectively, on faces of bounding box parallel to planes $x=0$, and $y=0$ (illustrated in Figure \ref{fig:Thm_proof}). For the sake of contradiction, assume that $\min\{\norm{a},\norm{b}\}$ is more than $\sqrt{3/2 } \mu(\calC)$. Let $\pi(a), \pi(b)$ be the respective projections of $a$ and $b$, on the plane $z=0$. Then $\norm{\pi(a)}^2 \ge \norm{a}^2 - h_z^2$, and $\norm{\pi(b)}^2 \ge \norm{b}^2 - h_z^2$, where $h_z$ is the height of the bounding box $\calB$ in the $z$ direction. We have,
 $$
 \begin{array}{ccc}
 \mu(\calC) &\ge& \norm{a-b} \ge \norm{\pi(a) - \pi(b)} \\
		 &=& \sqrt{ \norm{\pi(a)}^2 + \norm{\pi(b)}^2 }\\
		 &\ge&  \sqrt{ \norm{a}^2 - h_z^2 + \norm{b}^2 - h_z^2 }\\
		 &\ge&  \sqrt{ 2\frac{3}{2} \mu(\calC)^2 - 2h_z^2   }\\
 \end{array}
 $$
 We have that $2 \mu(\calC)^2 - 2h_z^2 < 0$. This is clearly a contradiction since the height of a face can not be more than the diameter of the pointset. Thus, $\min \{\norm{a},\norm{b}\} \le \sqrt{ \frac{3}{2} }\times \mu(\calC)$. Since $\delta(\calB,\calC)\le \min \{\norm{a},\norm{b}\}$, we get the desired result.
 \end{proof}
 
 \begin{figure}[h]
\centering
\includegraphics[scale=0.4]{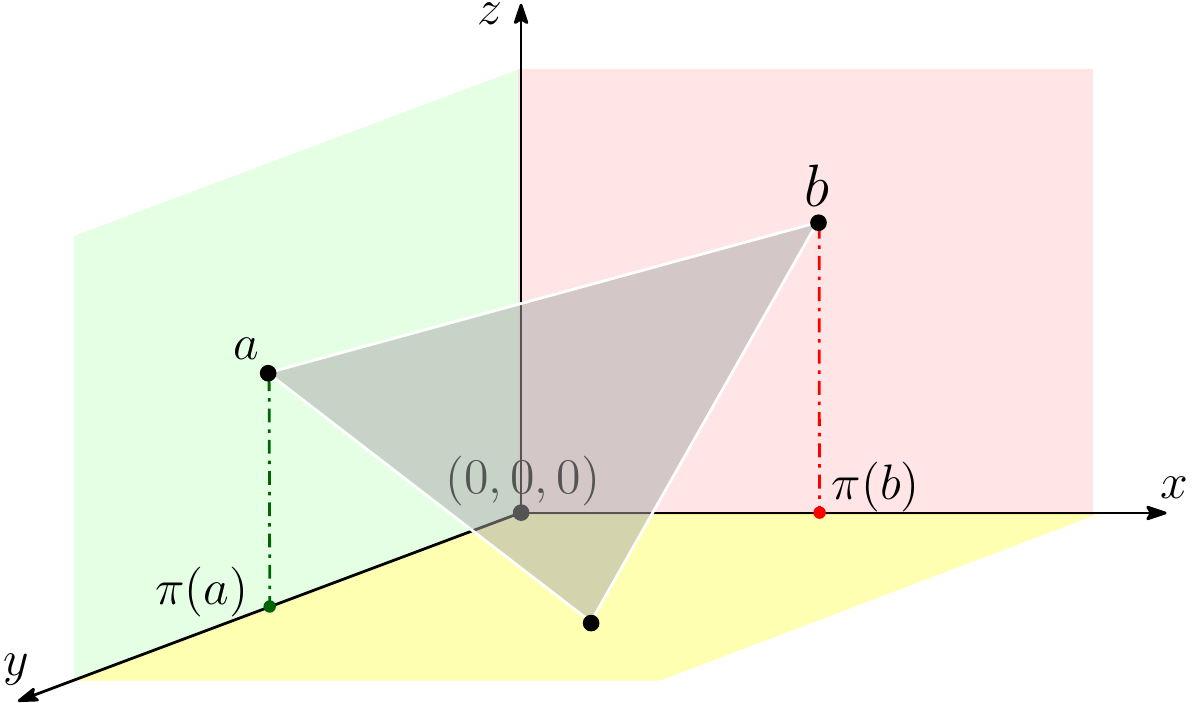}
\caption{Illustration of proof of Theorem \ref{thm:3dproof}.}
\label{fig:Thm_proof}
 \end{figure}
In the following, we show that the statement of the conjecture is true for all $d$ for certain  \textit{symmetric} pointsets. Before we state and prove the result, we make some observations.

Since translation and rotation of points do not change the Hausdorff distance, we may assume that the origin is a point on $\calB$ with the maximum distance from $\calC$. As both $\calC$ and $\calB$ are convex, origin must be a corner vertex of the bounding box. For $\calB$ to be a minimum axis-parallel bounding box, each facets adjacent to origin must contain at least one point from the given pointset $P$. For $1\le i\le d$, let $p_i$ be a point in $P$ closest to origin for which the $i^{th}$ coordinate is zero. Then $p_1,p_2,\ldots,p_d$ define a hyperplane $X$. We observe the following:
\begin{fact}
The Hausdorff distance $\delta(\calB,\calC)$ is at most the Euclidean distance from origin to $X$ regardless of the positions of points $p_{d+1},p_{d+2},\ldots,p_n$.
\end{fact}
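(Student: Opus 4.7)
The plan is to exhibit a specific point of $\calC$ within Euclidean distance $d(\text{origin},X)$ of the origin; this will yield $\delta(\calB,\calC)=d(\text{origin},\calC)\le d(\text{origin},X)$, using the standing assumption that origin is the farthest corner of $\calB$ from $\calC$. The natural candidate for this point is the foot $q$ of the perpendicular from the origin onto the hyperplane $X$.

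First I would record the setup: after the translation already performed, each $p_i$ has its $i$th coordinate zero and all remaining coordinates non-negative, since all of $\calC$ lies in the non-negative orthant with origin at the corner of $\calB$. Writing $X$ in the form $\sum_j a_j x_j = c$ with $c>0$, the perpendicular foot is $q=(c/\|a\|^2)\,a$, at distance $\|q\|=c/\|a\|=d(\text{origin},X)$ from the origin. Since each $p_i\in P\subseteq \calC$ and $\calC$ is convex, the simplex $S=\mathrm{conv}(p_1,\ldots,p_d)$ is contained in $\calC$, and this simplex lies entirely in $X$.

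The key step is to show that $q\in S$, which would immediately give $q\in\calC$ and hence the desired bound. To this end, I would express $q$ in barycentric coordinates as $q=\sum_i\lambda_i p_i$ with $\sum_i\lambda_i=1$, and reduce the claim to verifying $\lambda_i\ge 0$ for every $i$. This is the main obstacle, and it is where both of the standing hypotheses must be exploited: origin being the \emph{farthest} corner of $\calB$ from $\calC$, and each $p_i$ being the \emph{closest} point of $P$ on the facet $\{x_i = 0\}$. The intuition is that a negative barycentric coordinate $\lambda_i$ would tilt $X$ away from the $i$th coordinate axis to the extent that the opposite corner of $\calB$ becomes farther from $\calC$ than the origin, contradicting the choice of origin. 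I would aim to formalize this as a contradiction argument using the minimality of $\|p_i\|$ among points of $P$ on the facet $\{x_i = 0\}$.

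Once $q\in S\subseteq\calC$ is established, the Fact follows directly: $\delta(\calB,\calC) = d(\text{origin},\calC) \le \|q\| = d(\text{origin},X)$. The ``regardless of the positions of $p_{d+1},\ldots,p_n$'' clause is then automatic, since the witness point $q$ is constructed using only $p_1,\ldots,p_d$; enlarging $P$ by adding more points only enlarges $\calC$, which can only decrease $d(\text{origin},\calC)$. All of the real work is therefore concentrated in verifying that the perpendicular foot $q$ lies in the simplex spanned by the distinguished points $p_1,\ldots,p_d$.
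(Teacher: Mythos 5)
Your plan follows essentially the same route as the paper: both arguments come down to showing that the foot $q$ of the perpendicular from the origin to the hyperplane $X$ lies in $\mathrm{conv}(p_1,\ldots,p_d)\subseteq\calC$, from which $\delta(\calB,\calC)=\min_{c\in\partial\calC}\norm{c}\le\norm{q}=d(\mathrm{origin},X)$ follows (the paper phrases this as: the closest point of $X$ to the origin lies in $\calB$, and $X\cap\calB$ is contained in the convex hull of $p_1,\ldots,p_d$). The difficulty is that you never prove the one claim that carries all the weight. You reduce it to non-negativity of the barycentric coordinates $\lambda_i$ of $q$, correctly identify this as ``the main obstacle,'' and then offer only an intuition about tilting and a statement that you ``would aim to formalize'' a contradiction argument. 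As written, the proposal consists of the easy reductions plus an unproven central lemma, so it does not establish the Fact.

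The missing step is also not a routine verification. Absent the standing hypotheses it is false: take $d=3$ with $p_1=(0,1,1)$, $p_2=(1,0,1)$, $p_3=(100,100,0)$. The hyperplane through them is $x+y+199z=200$, and the perpendicular foot from the origin is $q=\frac{200}{39603}(1,1,199)$, whose third coordinate is approximately $1.005$, strictly larger than the third coordinate of every $p_i$; hence $q\notin\mathrm{conv}(p_1,p_2,p_3)$ and $\lambda_3<0$. In this configuration the origin is not the farthest corner of $\calB$ from $\calC$, so the Fact itself is not contradicted, but the example shows that any proof of $q\in\mathrm{conv}(p_1,\ldots,p_d)$ must genuinely exploit the extremality assumptions (farthest corner, closest points on the facets), and your sketch gives no indication of how that would go. Until that lemma is supplied, the argument is incomplete.
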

\begin{proof} We note that the points $p_1,p_2,\ldots,p_d$ that define a hyperplane $X$ lie on the boundary of $\calB$ thus, the closest point on $X$ can't be outside the box. Also, the points in $X\cap \calB$ are a convex combination of the points $p_1,p_2,\ldots,p_d$, and must lie in their convex hull, and therefore, must also lie in the convex hull $\calC$. The claim follows.
\end{proof}
Furthermore, there exist a pointset for which the Hausdorff distance $\delta(\calB,\calC)$ is equal to the Euclidean distance from origin to $X$, regardless of the positions of points $p_{d+1},p_{d+2},\ldots,p_n$.
This clearly is the case when all of the points $p_{d+1},p_{d+2},\ldots,p_n$ lie in the halfspace defined by $X$ that does not contain origin.

From these observations, we deduce that the upper bound on $\delta(\calB,\calC)$ is independent of the positions of points $p_{d+1},\ldots,p_n$. Thus, we only care about the first $d$ points.
 \begin{theorem}
 \label{thm:symmetric}
	Let $\{p_1,p_2,\ldots,p_d \}$ be a subset of a pointset such that the $i^{th}$ coordinate of the point $p_i$  is zero and all other coordinates are set to some constant $a$. Then, 
	$$\delta(\calB,\calC) \le (d-1)/d\sqrt{\frac{{d}}{2}}\times \mu(\calC).$$
\end{theorem}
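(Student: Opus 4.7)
The plan is to invoke the Fact stated immediately before the theorem and thereby reduce the task to upper-bounding the Euclidean distance from the origin, placed at the corner of $\calB$ farthest from $\calC$ as in the preceding discussion, to the unique hyperplane $X$ passing through $p_1,\ldots,p_d$. The highly symmetric placement of these $d$ points should make $X$ trivial to identify, so the remaining points $p_{d+1},\ldots,p_n$ drop out of the analysis entirely, exactly as promised by the Fact.

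First I would identify $X$ explicitly. Each $p_i$ has one zero coordinate and $d-1$ coordinates equal to $a$, so $\sum_{j=1}^{d}(p_i)_j=(d-1)a$ for every $i$. Hence $p_1,\ldots,p_d$ all lie on the hyperplane $\{x:\sum_j x_j=(d-1)a\}$, and by dimension count this is the unique hyperplane through them. Its unit normal is $(1,\ldots,1)/\sqrt{d}$, so the distance from the origin to $X$ equals $(d-1)a/\sqrt{d}$, which by the Fact is an upper bound on $\delta(\calB,\calC)$.

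Next I would lower-bound the diameter $\mu(\calC)$ using only the $p_i$ themselves: for $i\ne j$, the vector $p_i-p_j$ has exactly two nonzero entries, each of magnitude $a$, giving $\|p_i-p_j\|=\sqrt{2}\,a$ and therefore $\mu(\calC)\ge\sqrt{2}\,a$. Substituting $a\le\mu(\calC)/\sqrt{2}$ into the previous bound yields $\delta(\calB,\calC)\le\frac{d-1}{\sqrt{d}}\cdot\frac{\mu(\calC)}{\sqrt{2}}=\frac{d-1}{d}\sqrt{\tfrac{d}{2}}\,\mu(\calC)$, which matches the stated inequality.

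I do not expect a substantive obstacle. The only conceptual step is trusting the preceding Fact to eliminate the points $p_{d+1},\ldots,p_n$ from the upper-bound analysis; once that reduction is accepted, the theorem collapses to the two short computations above, enabled by the unusually clean symmetry of the configuration, which forces the defining hyperplane to have equal normal components and the pairwise distances between the $p_i$ to be identical.
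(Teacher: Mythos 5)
Your proof is correct and follows essentially the same route as the paper: the foot of the perpendicular from the origin to your hyperplane $X=\{x:\sum_j x_j=(d-1)a\}$ is exactly the centroid $z=\frac{1}{d}\sum_i p_i$ whose norm the paper computes, so the two arguments bound the identical quantity, and both then use $\mu(\calC)\ge\sqrt{2}\,a$ in the same way. If anything yours is slightly cleaner, since it dispenses with the paper's separate (and unnecessary) case $a\le 1/\sqrt{2}$.
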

\begin{proof}
If the constant $a$ is at most $\frac{1}{ \sqrt{2} }$, then the distance of each point from the origin is bounded by $\sqrt{  \frac{d}{2}  }$ and the claim follows. Therefore, we may assume that $a > \frac{1}{ \sqrt{2} }$. We further assume that $d>2$.
Clearly the diameter $\mu(\calC)$ in this case is at least $\norm{p_i-p_j}$, which is given by $\sqrt{2a^2}$. Consider the centroid point $z = \frac{1}{d}\sum_{i=1}^{d} p_i$, then
\begin{equation*}
\begin{split}
\norm{z} & = \sqrt{   d \left( 1/d\times (d-1)a \right)^2  }\\
& =   (d-1)/d\times \sqrt{d a^2} \\
& = (d-1)/d \sqrt{\frac{{d}}{2}}\times  \mu(\calC),
\end{split}
\end{equation*}
which completes the proof. 
\end{proof}
We note in Theorem \ref{thm:symmetric}  that as $d$ goes to infinity, $(d-1)/d$ goes to one and this ratio goes to $\sqrt{\frac{{d}}{2}}\times  \mu(\calC)$.
The following proposition shows that the result in the previous theorem is best possible.
\begin{prop}
There exist pointsets for which $\delta(\calB,\calC) = (d-1)/d\sqrt{\frac{{d}}{2}}\times \mu(\calC)$.
\end{prop}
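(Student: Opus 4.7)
The plan is to show that the upper bound established in Theorem~\ref{thm:symmetric} is tight by exhibiting precisely the symmetric configuration used in its proof. Take $P = \{p_1, p_2, \ldots, p_d\}$ with $p_i \in \mathbb{R}^d$ having its $i$-th coordinate equal to $0$ and all remaining coordinates equal to some fixed constant $a > 0$. Since the ratio $\delta(\calB,\calC)/\mu(\calC)$ is scale-invariant, the specific value of $a$ does not matter, but one can take $a > 1/\sqrt{2}$ to stay in the range considered by Theorem~\ref{thm:symmetric}.

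First I would identify the basic geometric objects attached to $P$. For each coordinate $i$, the $i$-th entry of $p_j$ equals $0$ precisely when $j = i$ and equals $a$ otherwise, so the axis-parallel bounding box is $\calB = [0,a]^d$, and the origin is a corner of $\calB$. The convex hull $\calC$ is the $(d-1)$-simplex spanned by $p_1, \ldots, p_d$, which lies in the affine hyperplane $H = \{x \in \mathbb{R}^d : x_1 + \cdots + x_d = (d-1)a\}$. Because $P$ consists of exactly these $d$ points, every pairwise distance equals $\|p_i - p_j\| = a\sqrt{2}$, so $\mu(\calC) = a\sqrt{2}$ holds as an equality, not only as a lower bound.

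Next I would compute the distance from the origin to $\calC$ by locating its nearest point on $\calC$. The orthogonal projection of $\mathbf{0}$ onto $H$ is the point whose coordinates are all equal to $(d-1)a/d$, and this is precisely the centroid $z = \tfrac{1}{d}\sum_{i=1}^d p_i$. Since $z$ is a convex combination of the $p_i$, it lies in $\calC$, so the nearest point of $\calC$ to the origin is $z$ itself, giving
\[
\|\mathbf{0} - z\| \;=\; \sqrt{d\bigl((d-1)a/d\bigr)^2} \;=\; \tfrac{d-1}{d}\sqrt{d}\,a \;=\; \tfrac{d-1}{d}\sqrt{\tfrac{d}{2}}\;\mu(\calC).
\]
As the origin lies on $\partial \calB$, this yields the lower bound $\delta(\calB,\calC) \ge \tfrac{d-1}{d}\sqrt{d/2}\,\mu(\calC)$.

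Finally, Theorem~\ref{thm:symmetric} provides the matching upper bound, so equality holds for this pointset and the proposition follows. The only non-routine step is verifying that the foot of the perpendicular from $\mathbf{0}$ to $H$ actually lands inside $\calC$ rather than outside the simplex; once one recognizes that this foot is by symmetry the centroid of the vertices $p_1,\ldots,p_d$, this is immediate, and the rest of the argument reduces to a direct arithmetic check.
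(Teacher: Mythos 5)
Your proof is correct and follows the same strategy as the paper's: exhibit a symmetric witness configuration, take the origin as a corner of $\calB$, identify the nearest point of $\calC$ to the origin as the centroid $z=\tfrac{1}{d}\sum_i p_i$, and pair the resulting lower bound with the upper bound of Theorem~\ref{thm:symmetric}. The one substantive difference is the witness set. The paper uses $2d$ points, adjoining to each $p_i$ a partner $p_i'$ whose $i$-th coordinate is $1$ instead of $0$ (with $a=1/\sqrt{2}$); this makes $\calC$ full-dimensional (the differences $p_i'-p_i=e_i$ span $\mathbb{R}^d$) and places the extra points in the half-space beyond the hyperplane $X$ through $p_1,\ldots,p_d$, which is precisely the situation the preceding Fact identifies as tight. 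You use only the $d$ points $p_1,\ldots,p_d$, so $\calC$ is a degenerate $(d-1)$-simplex. This still works if $\partial\calC$ in \eqref{eq:HD} is read as the topological boundary (then $\partial\calC=\calC$ and your computation gives the stated value), but under a relative-boundary reading the nearest boundary point to the origin is no longer the centroid and the distance is strictly larger, breaking the claimed equality; it would be safer to either state that reading or add points beyond $X$ as the paper does. In your favor, you justify the step the paper only asserts, namely that the foot of the perpendicular from the origin to $X$ is the centroid and therefore lies inside $\calC$, and you also note that $\mu(\calC)=a\sqrt{2}$ holds with equality rather than merely as a lower bound.
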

\begin{proof}
Consider a pointset that contain $2\times d$ points as follows: $\{p_1,p'_1,p_2,p'_2,\ldots,p_d,p'_d \}$ be such that the $i^{th}$ coordinate of the point $p_i$ is zero (similarly the $i^{th}$ coordinate of the point $p'_i$ is one) and all other coordinates are set to some constant $1/\sqrt{2}$. Clearly, the diameter of the pointset is one in this case. The closest point to the origin is the centroid $z$ of $p_1,p_2,\ldots,p_d$. We computed the norm of $z$ in the proof of Theorem \ref{thm:symmetric} to be
$$
\norm{z} =  (d-1)/d \sqrt{\frac{{d}}{2}}\times \mu(\calC).
$$
\end{proof}

We conclude this section with a following remark:
\begin{remark}
If the state dimension $d$ is $3$ and we partition it into a 2-dimensional and a scalar state, then as compared to \eqref{eqn:Conjecture_3D}, we can obtain a different bound on the Hausdorff distance from the bounded region $\calB$ (obtained as in Theorem \ref{thm:main}) to the convex hull of given set of points in $\mathbb{R}^3$. In particular, consider $I = \{1,2,3\}$ and a partition $P = \{I_1,I_2\}$ of $I$. Let $X = \{x_i\}$ be a set of points in $\mathbb{R}^3$, where $x_i = [x_{i,j}]_{j\in I}$. For each $x_i$, we partition it into $x_i^1$ and $x_i^2$ where $x_i^\ell = [x_{i,j}]_{j\in I_\ell}\in\mathbb{R}^{|I_\ell|}$. Further let $X^\ell = \{x_i^\ell\}$. 
If $\calC, \calC^1, \calC^2$ are the convex hulls of points in $X, X^1, X^2$, respectively, and $\calB = \calC^1\times \calC^2$, then we can easily show that
\begin{equation}
\label{eq:other_bound}
\delta(\calB,\calC) \le \max \{\mu(\calC^1), \mu(\calC^2)\}.
\end{equation}
\end{remark}
\section{Numerical Illustration}
\label{sec:NE}

\begin{figure*}[ht]
\centering
\begin{subfigure}[b]{0.23\textwidth}
\centering
\includegraphics[scale=0.155]{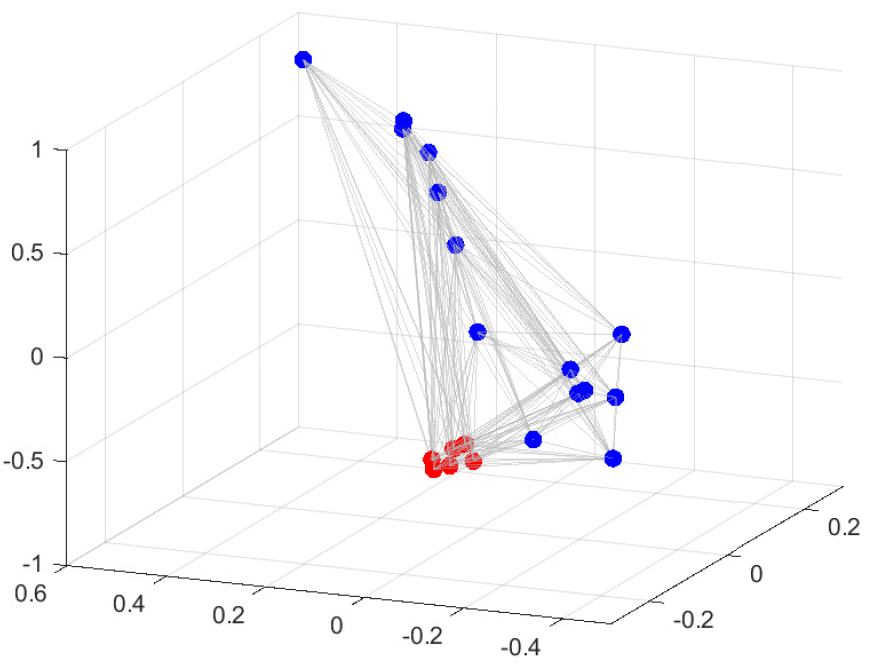}
\caption{}
\end{subfigure}\;
\begin{subfigure}[b]{0.21\textwidth}
\centering
\includegraphics[scale=0.143]{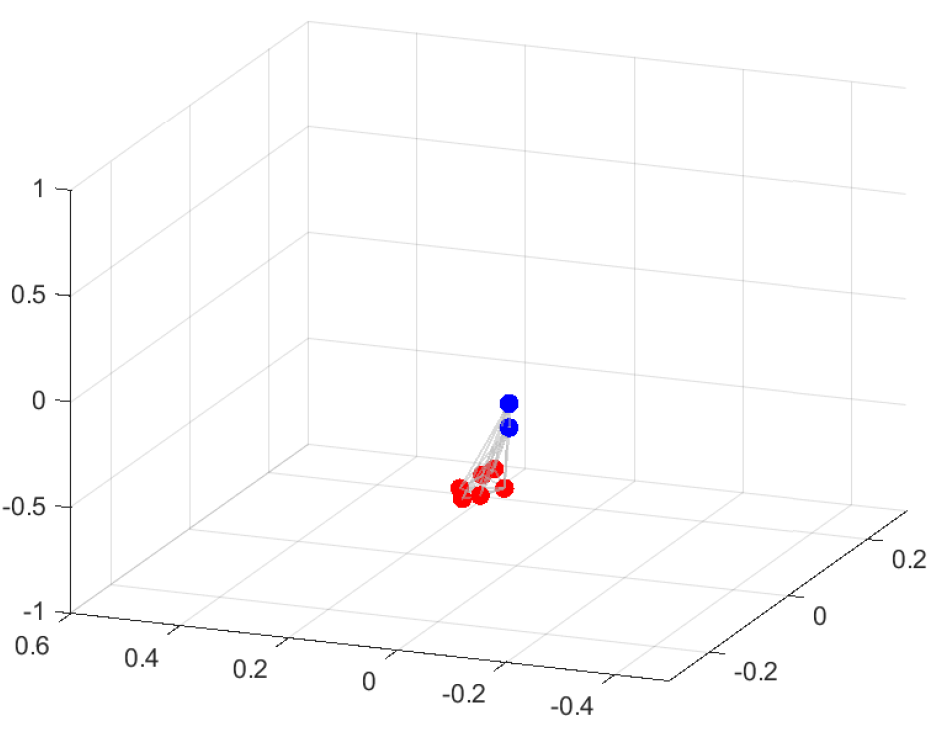}
\caption{}
\end{subfigure}
\begin{subfigure}[b]{0.21\textwidth}
\centering
\includegraphics[scale=0.143]{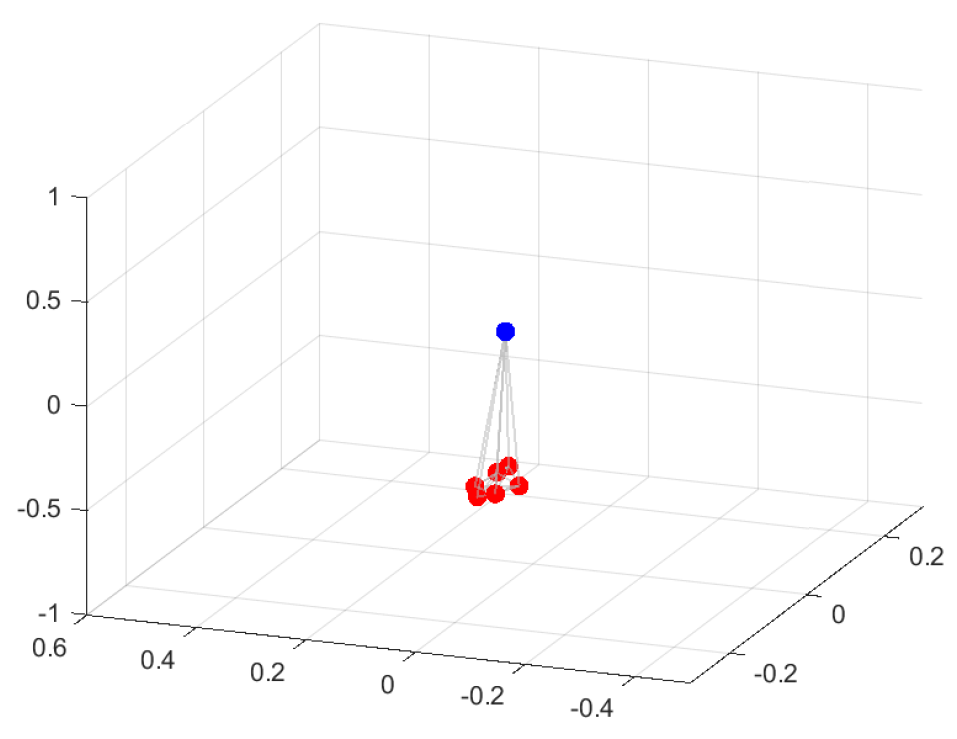}
\caption{}
\end{subfigure}
\begin{subfigure}[b]{0.21\textwidth}
\centering
\includegraphics[scale=0.143]{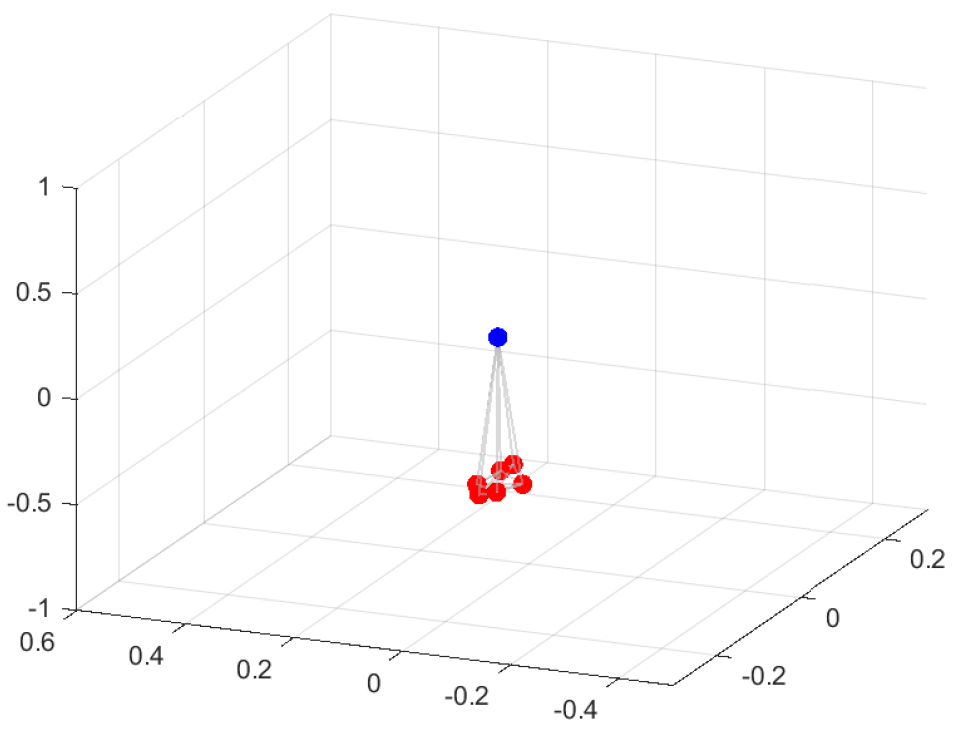}
\caption{}
\end{subfigure}
\caption{(a) Initial positions of agents. Final positions of agents as a result of (b) 3-dimensional ADRC, (c) RBC by partitioning 3-dimensional state into 2- and 1-dimensional states, and (d) coordinate-wise RBC.}
\label{fig:Num_Eval_Graphs}
\end{figure*}

\begin{figure*}[htb]
\centering
\begin{subfigure}[b]{0.25\textwidth}
\centering
\includegraphics[scale=0.17]{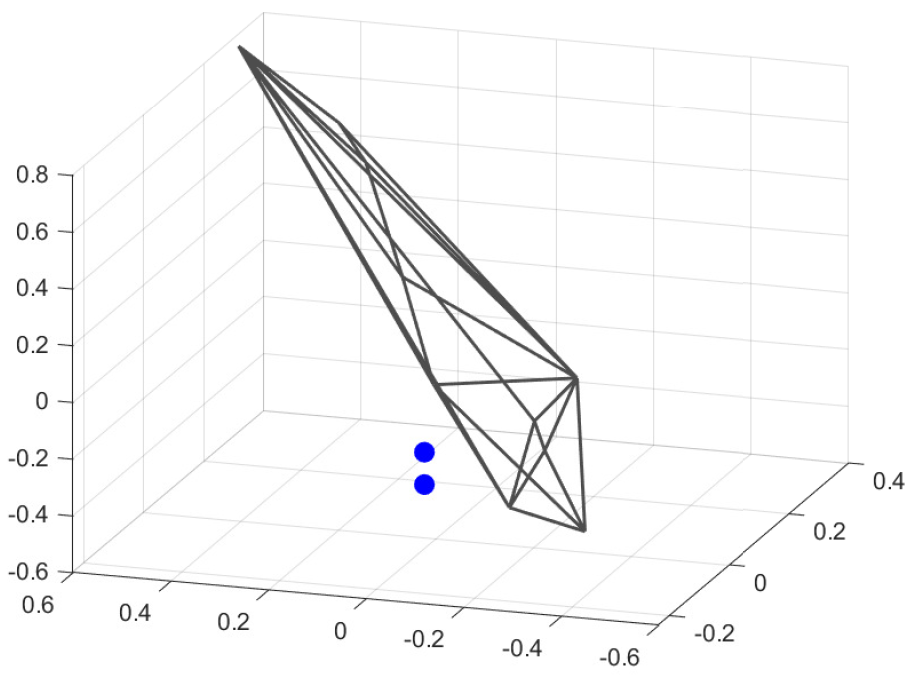}
\caption{$\calC$}
\end{subfigure}\;
\begin{subfigure}[b]{0.25\textwidth}
\centering
\includegraphics[scale=0.17]{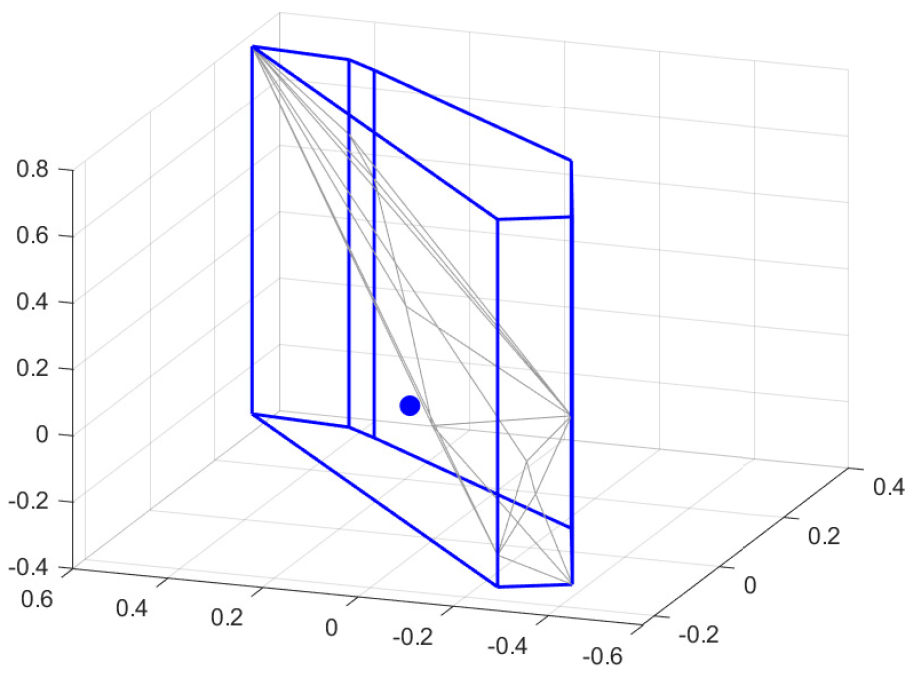}
\caption{$\calB_1$}
\end{subfigure}
\begin{subfigure}[b]{0.25\textwidth}
\centering
\includegraphics[scale=0.17]{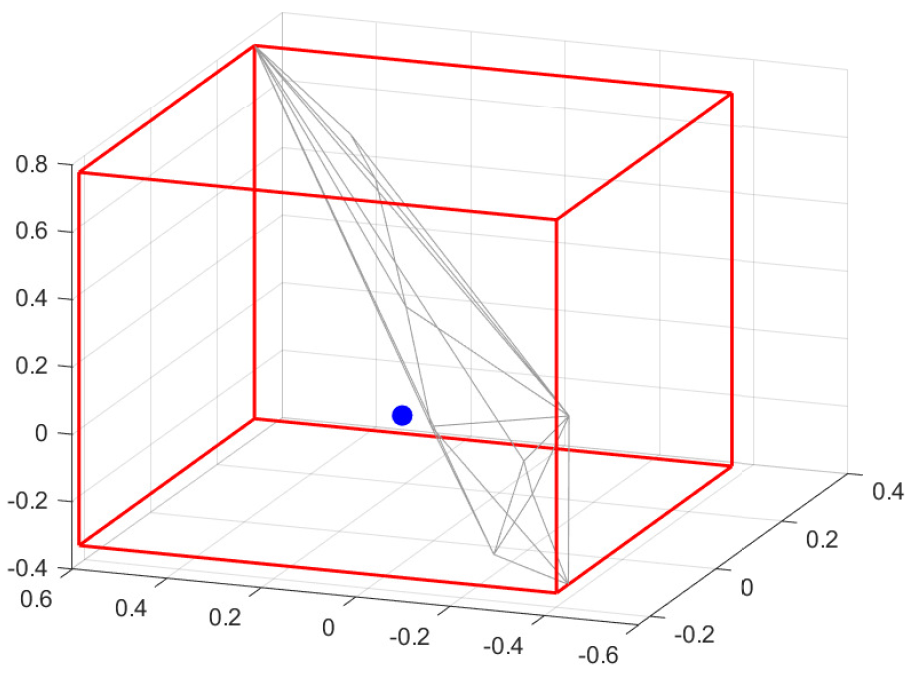}
\caption{$\calB_2$ }
\end{subfigure}
\caption{Convex hull of normal agents' initial positions, and regions in which normal agents converge to some point as a result of resilient bounded consensus.}
\label{fig:Num_Eval_Boxes}
\end{figure*}



We illustrate resilient bounded consensus through an example. We consider a complete, undirected network of $N=20$ agents, of which $6$ are adversarial. The network graph is fixed and does not change over time. Since all agents are pairwise adjacent, each normal agent $i$ has $F_i=6$ adversarial agents in its neighborhood. Moreover, the state of each agent is its position in $\mathbb{R}^3$. Figure \ref{fig:Num_Eval_Graphs}(a) shows the network graph and initial positions of normal and adversarial agents, which are shown in blue and red colors, respectively. In our simulations, adversarial agents remain static and do not update their positions (states).

First, we implement 3-dimensional resilient consensus algorithm ADRC using centerpoint. Since for each normal agent $i$, we have $N_i=20$ and $F_i = 6 > \lceil\frac{N_i}{d+1}\rceil-1$, the resilience bound in \eqref{eqn:theory} is not satisfied. Figure \ref{fig:Num_Eval_Graphs}(b) shows the final positions of agents. We can see that normal agents fail to converge at a common point inside the convex hull of their positions, which is illustrated in Figure \ref{fig:Num_Eval_Boxes}(a). Second, we implement the resilient bounded consensus by partitioning the 3-dimensional state into 2-dimensional and scalar states. In other words, we use a partition $P=\{I_1,I_2\}$ of $I = \{1,2,3\}$ in Algorithm \ref{algo:RBC}, where $I_1=\{1,2\}$ and $I_2=\{3\}$. Since $F_i = 6 \le \lceil\frac{N_i}{\max_{\ell\in\{1,2\}}|I_\ell|+1}\rceil-1 = \lceil\frac{20}{3}\rceil-1$, convergence of normal nodes is guaranteed in a bounded region $\calB_1$. Figure \ref{fig:Num_Eval_Graphs}(c) shows the final positions of normal agents and Figure \ref{fig:Num_Eval_Boxes}(b) shows the bounded region $\calB_1$ along with the consensus point. Next, we implement the resilient bounded consensus by partitioning $3$-dimensional state into three $1$-dimensional states. Figure \ref{fig:Num_Eval_Graphs}(d) shows the final positions of agents. We observe that all normal agents achieve consensus and converge at a point inside a bounded region $\calB_2$, which is a hyperrectangle illustrated in Figure~\ref{fig:Num_Eval_Boxes}(c). 

We observe that the resilient bounded consensus in which the $3$-dimensional state is partitioned into lower-dimensional states has an improved resilience ($F_i = 6$) compared to the $3$-dimensional ADRC that is resilient against at most $F_i=4$ adversarial agents. However, this improvement comes at the cost of accuracy. The resilient bounded consensus only guarantees that normal agents converge in a bounded region $\calB_1 = \calC^1\times\calC^2$ and not necessarily inside the convex hull of initial positions $\calC$. Let $b_1\in\calB_1$ (similarly $b_2\in\calB_2$) be the convergence point of normal agents as a result of resilient bounded consensus algorithm. Then, the Hausdorff distance based accuracy bound in \eqref{eqn:Conjecture_3D} holds.
$$
\delta(b_1,\calC) = 0.18 \le \delta(\calB_1,\calC) = 0.731 \le \sqrt{\frac{3}{2}} \mu(\calC) = 1.9.
$$

Similarly, the bound in \eqref{eq:other_bound} also holds. 
$$
\delta(b_1,\calC) \le \delta(\calB_1,\calC) \le  \mu(\calC^1) = 1.13.
$$

Similarly, in case of $b_2\in\calB_2$, we have
$$
\delta(b_2,\calC) = 0.16 \le \delta(\calB_2,\calC) = 0.75 \le \sqrt{\frac{3}{2}} \mu(\calC) = 1.9.
$$

Thus, as a result of resilient bounded consensus, the resilience improves and we are guaranteed to converge inside a bounded region at some point that is close to but not necessarily inside the convex hull of initial positions.

\section{Conclusion}
\label{sec:Con}
There is a trade-off between resilience and accuracy in the resilient multi-dimensional consensus problem. Resilience depends on the local connectivity of normal agents within the network, as well as the dimension of their state vector. If each normal agent $i$ has less than $\lceil\frac{N_i}{d+1}\rceil$ adversarial agents in its neighborhood, then all normal agents can be guaranteed to converge inside the convex hull $\calC$ of their initial states. We showed that the convergence of normal agents inside a bounded convex region $\calB \supseteq \calC$ can be guaranteed even if the number of adversaries in the neighborhood of a normal agent is more than $\lceil\frac{N_i}{d+1}\rceil$. For this, we partitioned $d$-dimensional state into multiple lower-dimensional states and implemented multiple instances of resilient consensus in lower dimensions. Since for a given $N_i$, resilience is better in lower dimensions, the overall resilience was improved. However, as a result of this, agents might converge outside of $\calC$ at some point in $\calB$. The maximum possible distance between the convergence point in $\calB$ and $\calC$ can be measured by the Hausdorff distance from $\calB$ to $\calC$. We provided upper bound on the Hausdorff distance from $\calB$ to $\calC$ in special cases. In the future, we would like to provide accuracy bounds for more general cases. Moreover, we would also like to explore other approaches to further exploit the trade-off between resilience and accuracy in resilient multi-dimensional consensus.
\bibliographystyle{IEEEtran}
\bibliography{refer}
\end{document}